\def\mathscr{\EuScript}
\newcommand{\argmin}{\mathop{\arg \! \min}}                 % Arg-min
\theoremstyle{plain}
\newtheorem{theorem}{Theorem}
\newtheorem{lemme}{Lemma}
\newtheorem{proposition}{Proposition}
\newtheorem{definition}{Definition}
\newenvironment{Proof}{\small{\bf Proof.}}{\hfill$\Box$\normalsize
\bigskip}
\newcommand{\RR}{{\mathbb R}} %
\newcommand{\EE}{{\mathbb E}} %
\newcommand{\PP}{{\mathbb P}} %
\newcommand{\Risk}{{\mathcal R}} %
\newcommand{\1}{{\mathbf 1}}
\def\cvar{C\!V\!a\!R}
\def\VaR{V\!a\!R}
\def\gain{G}
\def\Om{\Omega}
\def\F{\mathcal{F}}
\def\gain{G}
\def\text#1{\quad\mbox{#1}\quad} %wide text in mathsok
\def\defegal{\triangleq}
\def\gain{X}
\def\Om{\Omega}
\def\F{\mathcal{F}}
\newenvironment{proof}{\small{\bf
    Proof:}}{\hfill$\Box$\normalsize\\\bigskip}
\newenvironment{keywords}{\small{\bf
    Key Words:}}{\hfill\normalsize\\ \bigskip}
\title{Computational Dynamic  Market Risk Measures in Discrete Time Setting} 
\author{ %
Babacar Seck\footnote{
Department of Mathematics and Statistics,
University of Calgary
2500 University Drive NW
Calgary, AB, T2N 1N4, Canada.
Email: babacar@ucalgary.ca}\,,
Robert J.  Elliott\footnote{
Haskayne School of Business, 
University of Calgary Calgary, 
2500 University Drive NW
Calgary, AB, T2N 1N4, Canada.
Corresponding author:  relliott@ucalgary.ca
}\,, 
Jean-Pierre Gueyie\footnote{
D\'{e}partement de Finance,
Universit\'{e} du Quebec \`{a} Montr\'{e}al,
315, rue Sainte-Catherine Est
Local R-3555
Montr\'{e}al, Qu\'{e}bec, H2X 3X2, Canada.
Email: gueyie.jean-pierre@uqam.ca}
}
\begin{document}
\maketitle

\setcounter{tocdepth}{3} %LC-35

\vspace{0.5cm}
\begin{abstract}
Different approaches to defining dynamic market risk measures are available in the literature. Most  are focused or derived 
from probability theory, economic behavior or dynamic programming. Here, we propose an approach to define and implement 
dynamic market risk measures based on recursion and state economy representation. The proposed approach is  to be 
implementable and to inherit properties from static market risk measures.
\end{abstract}

\begin{keywords}
Dynamic risk measures; Markov Chain;  Value-at-Risk;  Conditional Value-at-Risk
\end{keywords}

\section{Introduction}

An understanding of  financial risk and related methods used to calculate risk indicators are necessary for a healthy economy, 
as evidenced by recent  economic crises. For commercial banks, regulations and rules  advocated by the Basel Committee I, II and  III are in terms of
minimum capital requirements, supervisory reviews and market discipline; see Basel~(\citeyear{Basel2,Basel3}).
Even if the methods used for the market risk are now well established, and the measure used is  
Value-at-Risk, the dynamic counterpart of this market risk is approximate. Recently, axioms have been  developed for  risk measures by
Delbaen et~al. in~\citep{Delbaen1999}. Proposals have been made to extend these properties to different model settings particularly  to a dynamic framework in~\cite{Delbaen2006,Delbaen2006bis,Delbaen2006three,Jouini2008,Stadje2010}, and references therein.
A recent review of the literature on dynamic market risk measures is available in~\cite{Acciaio2010}.
In the spirit of these studies, the problem considered here is how to build dynamic risk measures based on  static measurements which can easily be implemented and which satisfy certain properties.
We do so by making use of a recursion formula based on a given static risk measure $\Risk$ and  a state economy representation modeled by a discrete Markov chain. With these approaches, the dynamic risk measure inherits properties from the static risk $\Risk$.
The computational complexity of the dynamic risk measure is then reduced to the computational complexity of the static risk one. In term of interpretation,  the Markov chain captures the uncertainties in the economy while the decision maker's behavior is expressed via a recursion formula. 
 The paper is organized as follows. In section~\ref{properties}, usual properties of market risk measures and their interpretations are briefly recalled. 
Then section~\ref{newformul} presents the recursive principle and the state economy approach to formulate dynamic market risk. These formulations of dynamic risk measures inherit properties from static risk measures. In section~\ref{applic}, closed formulae for recursive risk measures and Markov modulated risk measures are given through the two main market risk measures:  Value-at-Risk and  Conditional Value-at-Risk.  An application to those two risk measures is shown when considering Gaussian and Weibull returns. The recursive Value-at-Risk presents closed formula easy to compute in the sense that at any time $t$, only the parameters of the distributions up to time $t$ are to be considered. The simulations for a ten day horizon time reveal a relatively small and stable Markov modulated risk measures which can contribute to lower the capital requirement. The conclusion is given in section~\ref{conclu}.

\section{Properties of risk measures}
\label{properties}
The usual properties of market risk measures are briefly presented, in both static and dynamic settings.

\subsection{Properties of  static risk measures}

Consider a function $\Risk$ defined  from a set of random variable  $\mathcal{X}$ into 
$\RR\cup\{-\infty\}\cup\{+\infty\}=\overline{\RR}$.  
 For the sake of simplicity,
suppose $\mathcal{X}$ is a set of real valued random variables such that

\begin{equation}
\forall\, X \in \mathcal{X},\,\forall\, m\in\RR,\quad X+m\in\mathcal{X}.
\end{equation}

\begin{definition}
 $\Risk$ is a (static) monetary risk measure if it satisfies the following two properties:
\begin{enumerate}
\item[$P_1$.] Monotonicity: $\forall\,X,\,Y\in
\mathcal{X},\,\,\, 
X \leq Y \Rightarrow\,\Risk(X)\geq \Risk(Y)$\, ,
\item[$P_2$.] Translation invariance: $\forall\,
m\in\RR,\,\,\,\Risk(X+m)=\Risk(X)-m$\,.
\end{enumerate}
\end{definition}
With monotonicity, if the position  $X$ is not more valuable than the position $Y$, then the risk associated with $X$ will be greater than the risk associated with $Y$. Translation invariance stipulates that additional cash decreases the risk, in particular
$\Risk(X+\Risk(X))=0$.

\begin{definition}
A risk measure $\Risk$ is said to be coherent if it satisfies the properties  $P_1$, $P_2$, and:
\begin{enumerate}
\item[$P_3$.] Sub-adititivity:
  $\forall\,X,Y\in\mathcal{X}$,\quad 
$\Risk(X+Y)\leq \Risk(X) + \Risk(Y)$\,,
\item[$P_4$.] Positive homogeneity: if $ k > 0$ and if 
$X\in\mathcal{X}$ then $\Risk(k X)=k\Risk(X)$\,.
\end{enumerate}
\end{definition}

With sub-additivity, the risk decreases by diversification.
Positive homogeneity implies that the risk is proportional to the size of the portfolio. 
Further discussions on  risk measure axiomatics can be found in~Artzner~\emph{et al.}~(\citeyear{Delbaen1999,Delbaen2002}).  

\subsection{Properties of  dynamic risk measures}
\label{dynamicmodel}

Fix a finite time horizon $T>0$. Contingent claims are represented by  random variables $X$ defined on a probability space
$(\Omega,\mathcal{F},\PP)$ with a filtration $(\mathcal{F}_{t})_{t\in S}$.
Here $S \subset [0;T]$ is a set of times  including $0$ and the time horizon $T$.
In the literature, two different approaches are considered to describe dynamic  market risk: either  $X$ gives a final payoff at time $T$, 
 or $X$ defines a stochastic process which takes into account payments  at times $t<T$, denoted $X_t$.
 In both cases, the problem of defining dynamic risk measures that are time consistent is crucial. Time consistency expresses the way that the risk measurements at different time are interrelated. Different notions of time consistency are introduced in the literature. Here, by time consistent we mean strongly time consistent. 
 
 Consider a family of maps $\Risk_{t}\!\!:\;L^{\mathrm{p}}(\Om,\F_{T},\PP) \to L^{\mathrm{p}}(\Om,\F_{t},\PP)$. %  a dynamic risk measure.
 To present the properties of dynamic risk measures, assume that $X$ gives a final payoff at time $T$ and is defined on 
 $L^{\mathrm{p}}(\Omega,\mathcal{F}_{T},\PP)$. % with value on $(\RR,\mathcal{B})$ where $\mathcal{B}$ stands for the Borel $\sigma$-field.
Denote  by $\Risk_{t}(X)$ the risk  associated with $X$ at time $t$.
$\Risk_{t}(X)$ is usually interpreted as a minimum capital requirement at time $t$ knowing the information $\F_{t}$.

The properties of a dynamic risk measure are:
\begin{itemize}
\item[$D_1$.] Normalization: $\Risk_{t}(0) = 0$, 
\item[$D_2$.] Monotonicity: $\forall\, X,\,Y\in L^{\mathrm{p}}(\Om,\F_{T},\PP),\quad X \leq  Y \Rightarrow \Risk_{t}(X)\geq \Risk_{t}(Y)$,
\item[$D_3$.] Translation invariance:  $\forall X\in 
L^{\mathrm{p}}(\Om,\F_{T},\PP),\,\forall m\in L^{\infty}(\Om,\F_t,\PP) , 
\quad \Risk_{t}(\gain + m )=
\Risk_{t}(\gain) - m $,
\item [$D_4$.] Local property: $\forall X,Y\in L^{\mathrm{p}}(\Om,\F_{T},\PP) ,\; \forall A\in \F_{t},
\quad \Risk_{t}(1_{A}X+1_{A^{c}}Y)=1_{A} \Risk_{t}(X)+ 1_{A^{c}}\Risk_{t}(Y) $,
\item [$D_5$.] Time consistency:  $\forall X,Y\in L^{\mathrm{p}}(\Om,\F_{T},\PP),\quad \Risk_{t}(X)\leq \Risk_{t}(Y)\Rightarrow
\Risk_{s}(X)\leq \Risk_{s}(Y), \; \forall s,t\in S$ with $s\leq t$.
\end{itemize}
After normalization, a null position does not require any capital reserve to make the risk zero. 
 The local property implies that,
if the event $A$ is $\F_{t}$-measurable then the decision maker should know at  time $t$ if $A$ has happened and adjusts his evaluation accordingly. 
 In the literature many definitions of time consistency appear, resulting from preferences and decision policy interpretations. Time consistency is derived 
 from the so-called Bellman principle;  see~\citep{Bellman1962,Bertsekas1996_2}.

A dynamic risk measure is convex if it satisfies $D_1$, $D_2$, $D_3$ and the following property:

\noindent \mbox{ $D_6$. \emph{Convexity}: }
 $\forall X,\,Y\in  L^{\infty}(\Om,\F_{T},\PP)$ and any
$\eta:  L^{\infty}_+(\Om,\F_t,\PP)\to [0,1]$
\begin{equation}
 \Risk_{t} (\eta X +
(1-\eta) Y)\leq \eta\Risk_{t}(X) +
 (1-\eta) \Risk_{t}(Y)\,,
\end{equation}
where $L^{\infty}(\Om,\F_{t},\PP)$ is the space of bounded $\F_t$-measurable random variables and $L_{+}^{\infty}(\Om,\F_{t},\PP)\defegal \big\{ \gain\in   L^{\infty}(\Om,\F_{t},\PP)\mid \PP\big(\gain \geq 0\big)=1\big\} $\,.
The convexity property says that diversification should decrease the risk.

 The coherence property is satisfied if the maps $\Risk_t$
satisfies $D_1$ to $D_5$ and the positive homogeneity property.

\noindent \mbox{$D_7$. \emph{Positive homogeneity: } }
\begin{equation}
\forall\,X\in L^{\mathrm{p}}(\Om,\F_{T},\PP),\,
 \forall\,m \in L_{+}^{\infty}(\Om,\F_{t},\PP)\, \quad 
\Risk_{t}(m \gain)= m\Risk_{t}(\gain)\,.
\end{equation}

Usually, static risk measures are associated with corresponding acceptable positions. The acceptable sets describe
the set of positions which are immunized against the uncertainties.
The following definition extends this notion to the dynamic case.
 
\begin{definition}
Let $\Risk_{t}: L^{\mathrm{p}}(\Om,\F_{T},\PP) \to L^{\mathrm{p}}(\Om,\F_{t},\PP)$ be a dynamic risk measure. We define the acceptable set
associated with $\Risk_{t}$ by:
\begin{equation}
\mathcal{C}_{\Risk_t}:=\left\{\gain\in L^{\mathrm{p}}(\Om,\F_{T},\PP) \mid
\Risk_t(\gain)\leq 0 
\right\}\,.
\end{equation}
\end{definition}
The properties of a dynamic risk measure $\Risk_t$ can be expressed via the associated acceptable set
 $ \mathcal{C}_{\Risk_t}$ and vice versa.

 \begin{proposition}
Suppose that  $\Risk_{t}$ satisfies the properties $D_1$, $D_2$ and $D_3$. 
Then, $\mathcal{C}_{\Risk_t}$ satisfies the following properties.
\begin{itemize}
\item $\mathcal{C}_{\Risk_{t}}$ is not empty and satisfies the following properties:
\begin{subequations}
\begin{equation}
\inf\{m\in L^{\mathrm{p}}(\Om,\F_{T},\PP)\mid m\in \mathcal{C}_{\Risk_{t}}  \}> -\infty\,,
\end{equation}
\begin{equation}
\forall X\in \mathcal{C}_{\Risk_{t}},\,\forall\, Y\in L^{\mathrm{p}}(\Om,\F_{T},\PP),\quad
 Y \leq X \Rightarrow Y\in \mathcal{C}_{\Risk_{t}}\,.
\end{equation}
\label{eq:simples}
\end{subequations}
\item Let $\mathcal{C}_{\Risk_{t}}$ be given. The associated dynamic risk measure $\Risk_{t}$ 
is:
\begin{equation}
\Risk_{t}( X )=ess\inf\{m\in L^{\mathrm{p}}(\Om,\F_{T},\PP)\mid m+X\in\mathcal{C}_{\Risk_{t}} \}\,.
\label{eq:risk_measure-acceptable}
\end{equation}
\item $\Risk_{t}$ is convex if and only if the associated acceptable set
$\mathcal{C}_{\Risk_{t}} $ is   conditionally convex:
\begin{equation}
\forall V,\,W\in \mathcal{C}_{\Risk_{t}},\quad \beta V + (1-\beta)W \in \mathcal{C}_{\Risk_{t}},\;\mbox{with} \;\beta\!\!: L^{\infty}_+(\Om,\F_t,\PP)\to [0,1].
\end{equation}

\item  $\Risk_{t}$ is coherent if and only if the associated acceptable set 
$\mathcal{C}_{\Risk_{t}} $ is a conditionally  convex cone.
\end{itemize}
\end{proposition}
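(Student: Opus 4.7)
The plan is to treat each of the four bullets separately, using only the defining axioms ($D_1$--$D_3$, together with $D_6$ or $D_7$ where relevant) and the conversion identity $\Risk_t(X+\Risk_t(X))=0$, which follows immediately from translation invariance and serves as the main bridge between $\Risk_t$ and $\mathcal{C}_{\Risk_t}$: it allows one to replace any claim by an acceptable one and is what makes the correspondence tight.

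For the first bullet, normalization $D_1$ gives $\Risk_t(0)=0$, so $0\in\mathcal{C}_{\Risk_t}$ and the set is nonempty. For the infimum bound I would restrict to deterministic constants $m\in\RR$ and use $D_3$ to compute $\Risk_t(m)=\Risk_t(0)-m=-m$; hence $m\in\mathcal{C}_{\Risk_t}$ iff $m\geq 0$, which bounds the infimum below by $0$. The closure property under the pointwise ordering follows directly from monotonicity $D_2$. For the second bullet, translation invariance rewrites $m+X\in\mathcal{C}_{\Risk_t}$ as $\Risk_t(X)\leq m$, so the essential infimum of such $\F_t$-measurable $m$ is exactly $\Risk_t(X)$, attained by $m=\Risk_t(X)$ itself.

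For the third bullet, the forward direction is immediate from $D_6$ applied to $V,W\in\mathcal{C}_{\Risk_t}$. For the reverse direction, given arbitrary $X,Y$, I would set $V:=X+\Risk_t(X)$ and $W:=Y+\Risk_t(Y)$; both lie in $\mathcal{C}_{\Risk_t}$ by the bridge identity. Conditional convexity of the set then gives $\beta V+(1-\beta)W\in\mathcal{C}_{\Risk_t}$, and a final application of $D_3$, using that $\beta\Risk_t(X)+(1-\beta)\Risk_t(Y)$ is $\F_t$-measurable and can be pulled out, yields the convexity inequality for $\Risk_t$.

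For the fourth bullet, the convex-cone characterization combines bullet three with the equivalence between $D_7$ and stability of $\mathcal{C}_{\Risk_t}$ under $\F_t$-measurable positive scalings. The direction $D_7\Rightarrow$ cone is one line: $X\in\mathcal{C}_{\Risk_t}$ and $m\in L^\infty_+(\Om,\F_t,\PP)$ give $\Risk_t(mX)=m\Risk_t(X)\leq 0$. The converse is the main obstacle, because $m$ is an $\F_t$-measurable random variable rather than a positive constant. My plan is to first work on $\{m>0\}$, where the bridge $V=X+\Risk_t(X)\in\mathcal{C}_{\Risk_t}$ combined with the cone property yields $mV\in\mathcal{C}_{\Risk_t}$, hence $\Risk_t(mX)\leq m\Risk_t(X)$ via $D_3$; the reverse inequality follows by the same argument applied with $X$ replaced by $mX$ and $m$ by $1/m$, which is still $\F_t$-measurable and positive there. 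On $\{m=0\}$ the identity $mX=0$ makes both sides vanish, and the two pieces are glued together via the local property $D_4$ (or, absent $D_4$, by approximating $m$ with $m+\varepsilon$ and passing to the limit using monotonicity).
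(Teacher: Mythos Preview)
The paper does not give its own proof of this proposition; it simply refers the reader to \cite{Delbaen2006} and \cite{Roorda2005}. So there is no argument to compare against at the level of strategy. Your direct sketch is the standard route and is essentially correct: the bridge identity $\Risk_t(X+\Risk_t(X))=0$ from $D_3$ is exactly the right tool, and the way you use it in bullets two, three and four (shift into the acceptance set, apply the structural property of $\mathcal{C}_{\Risk_t}$, then shift back out via $D_3$) is precisely how these equivalences are proved in the references the paper cites.

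Two small points are worth tightening. First, in the closure property of the first bullet the inequality as printed in the paper reads $Y\le X\Rightarrow Y\in\mathcal{C}_{\Risk_t}$, which is the wrong direction given the sign convention in $D_2$; your one-line appeal to monotonicity would actually prove the correct (upward) closure $Y\ge X\Rightarrow Y\in\mathcal{C}_{\Risk_t}$, and you should say so explicitly rather than gloss over it. Second, in bullets two through four you repeatedly pull $\Risk_t(X)$, $\beta\Risk_t(X)+(1-\beta)\Risk_t(Y)$, or $m\Risk_t(X)$ out via $D_3$, but $D_3$ as stated requires the translate to lie in $L^\infty(\Omega,\mathcal{F}_t,\PP)$, whereas $\Risk_t(X)$ is only guaranteed to be in $L^{\mathrm p}(\Omega,\mathcal{F}_t,\PP)$; likewise in the converse of bullet four, $1/m$ need not be bounded on $\{m>0\}$. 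These are routine technicalities (handled in the cited references by truncation and monotone approximation), but your write-up should flag them rather than silently assume an extended version of $D_3$.
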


\begin{Proof} See~\citep{Delbaen2006,Roorda2005}.
\end{Proof}

In the rest of this paper, we assume that $X=(X_t)_{t\geq 0}$ is a finite dimensional stochastic process defined on the filtered probability space
$(\Om,\F,(\F_t)_{t\geq 0},\PP)$.% and taking values in $(\RR,\mathcal{B})$.

\section{New formulation of dynamic risk measures}
\label{newformul}
A recursive principle and a state economy approach to formulate dynamic market risk are introduced. The new formulations of dynamic risk measures are relatively easy to compute and inherit properties from static risk measures.

\subsection{Dynamic risk measures based on a recursion formula}

We now construct a dynamic risk measure, denoted by $(\Risk_t)_{t\in S} $ which is
based on a static risk measure $\Risk\!:\mathcal{X} \to \overline{\RR}$. % for $t=0,\ldots,T$. 
Suppose that the time index set is discrete and finite: $S\defegal \{0,\ldots,T \}$.

\begin{definition} 
\label{defdynamic1}
Suppose that a static risk measure  $\Risk$ is given. 
We define a dynamic recursive risk measure by the collection  of functions $\Risk_t$ given by:
\begin{equation}
\label{dynamic1}
\forall F \in\F_t,\quad  \Risk_{t}(X)\defegal \1_{F}\Risk (X_t+\Risk_{t-1}(X) ),\quad \Risk_0(X)\defegal \Risk(X_0),
\end{equation} 
where $\forall \omega\in\Om,\;\1_{F}(w)=1$ if $F\in\F_t$ and $\1_{F}(w)=0$ elsewhere.
\end{definition}

The interpretation of the above definition is the following: at time $t=0$, the decision maker knows exactly the amount at risk $\Risk(X_{0})$. 
At time $t=1$, to immunize the position $X_{1}$ against  uncertainties, based on his knowledge at time $t=0$, the capital 
$\Risk(X_{0})$ is required. 
Then, the risky position at time $t=1$ becomes
$X_{1}+\Risk(X_{0})$.  So, the  capital at risk at time $t=1$ based on the risk measure $\Risk$  
and the information generated by the stochastic process $X$  
 is $\1_{F}\Risk( X_1+\Risk(X_{0}))$. %,\; \forall A\in\F_\1$. 
The risk at any time $t$ is then built with  recursion knowing the risk at time $t-1$. The above recursion formula of dynamic risk measures is similar to the recursive utility functions developed 
by~\cite{Epstein1989,Epstein1991}.
The principle  introduced here differs from the recursion utility, where the recursion is based on information available in a planning period.

 The collection of maps $\Risk_{t}$  
  inherits properties from the static risk measure $\Risk$. 
\begin{proposition}
\label{propositiondynamic}
Suppose that $X$ and $Y$ are two stochastic processes.
\begin{enumerate}
\item Suppose that the static risk measure $\Risk\!\!: \mathcal{X} \to \overline{\RR}$ satisfies the monotonicity property
 and $\Risk(X_{0})\leq \Risk(Y_{0})$. 
Then the dynamic recursive risk measure $\Risk_{t}$  is monotone. 

\item Suppose that the static risk measure $\Risk\!\!: \mathcal{X} \to \overline{\RR}$ 
satisfies the translation invariance property and the function $\Risk$ is additive.
Then, the dynamic recursive risk measure $\Risk_{t}$ is invariant by translation.
\item Suppose that $\Risk\!\!: \mathcal{X} \to \overline{\RR}$ is positively homogeneous and additive.
Then, the associated dynamic recursive risk measure  $\Risk_{t}$ satisfies the local property.
\item Suppose that $\Risk\!\!: \mathcal{X} \to \overline{ \RR}$ is convex and monotone.
Then, the associated dynamic recursive risk measure  $\Risk_{t}$ is convex.
\end{enumerate}
\end{proposition}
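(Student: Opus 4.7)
The plan is to prove all four parts of the proposition by induction on the time index $t \in \{0, 1, \ldots, T\}$. In every case, the base case at $t = 0$ collapses to the corresponding static property of $\Risk$ applied to $X_0$ (and $Y_0$ where relevant), which is given either by assumption or directly by hypothesis. For the inductive step, I would unfold the recursion $\Risk_t(\cdot) = \1_F \Risk(\cdot_t + \Risk_{t-1}(\cdot))$, insert the inductive hypothesis on $\Risk_{t-1}$ into the inner argument, and then invoke the assumed static property of $\Risk$ on the resulting random variable.

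For monotonicity (part~1), the base case is exactly the hypothesis $\Risk(X_0) \leq \Risk(Y_0)$; in the inductive step one compares the two inner arguments $X_t + \Risk_{t-1}(X)$ and $Y_t + \Risk_{t-1}(Y)$ pointwise, combining the componentwise ordering of $X$ and $Y$ with the inductive ordering of $\Risk_{t-1}$, and concludes by monotonicity of $\Risk$. For translation invariance (part~2), additivity lets one split $\Risk(X_t + m + \Risk_{t-1}(X + m))$, while the inductive identity $\Risk_{t-1}(X + m) = \Risk_{t-1}(X) - m$ supplies the $-m$ needed in the conclusion. For the local property (part~3), positive homogeneity on the nonnegative indicator $\1_A$ together with additivity factor $\Risk$ through the decomposition $\1_A X + \1_{A^c} Y$, propagating the indicator structure from time $t-1$ to time $t$. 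For convexity (part~4), the base case is the convexity of $\Risk$ applied to $X_0$ and $Y_0$; in the inductive step one uses the inductive convexity bound on $\Risk_{t-1}(\eta X + (1-\eta) Y)$, invokes monotonicity of $\Risk$ to transport this inequality inside the outer application of $\Risk$, and then applies convexity of $\Risk$ to the resulting convex combination of $X_t + \Risk_{t-1}(X)$ and $Y_t + \Risk_{t-1}(Y)$.

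The main obstacle will be bookkeeping the direction of inequalities, since $\Risk$ is order-reversing and each outer application flips inequalities inherited from the inductive hypothesis. This is especially delicate in the convexity argument: the inductive $\leq$ on $\Risk_{t-1}$ becomes a $\geq$ after insertion into the recursion and application of monotonicity, so chaining it with the static convexity inequality of $\Risk$ demands care to avoid a reversed conclusion. A secondary bookkeeping point is the role of the prefactor $\1_F$ and the implicit $\F_t$-measurability conventions in the recursion, which must be respected so that the additivity and positive-homogeneity arguments behave well pointwise on $F$ and its complement.
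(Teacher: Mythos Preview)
Your approach matches the paper's: the authors prove only part~1 in detail, by exactly the recursion argument you describe (they display the step from $t=0$ to $t=1$ using $X_1+\Risk(X_0)\leq Y_1+\Risk(Y_0)$ and then invoke static monotonicity of $\Risk$, concluding the remaining steps ``by recursion''), and declare parts~2--4 ``straightforward''. Your outline of parts~2--4 is more explicit than anything in the paper, but uses the same induction-on-$t$ scheme they implicitly intend.

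One remark worth making: the sign-flip obstacle you flag in your last paragraph is genuine and the paper does not address it either. Their argument uses the extra hypothesis $\Risk(X_0)\leq\Risk(Y_0)$ to get $X_1+\Risk(X_0)\leq Y_1+\Risk(Y_0)$, but after one application of the order-reversing $\Risk$ the inductive inequality $\Risk_1(X)\geq\Risk_1(Y)$ points the wrong way to combine with $X_2\leq Y_2$ at the next step. The paper simply writes ``By recursion'' at this juncture. So your concern is well-founded, and if anything your proposal is more candid than the published proof about where the delicate point lies; the paper offers no additional idea here for you to import.
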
 
We shall demonstrate $1.$;  the points $2.$,  $3.$ and $4.$ are straightforward.\\

\begin{proof}
\begin{enumerate}
\item Let $X$ and $Y$ be two stochastic processes such that $X \leq Y,\; \PP\, a.s$. It follows that for almost every $\omega\in \Omega$, 
\begin{equation*}
X_{t}(\omega) \leq Y_{t}(\omega),\quad t=1,\dots,T.
\end{equation*}
In particular,  $X_{1}(\omega) +\Risk(X_{0})  \leq Y_{1}(\omega)  + \Risk(Y_{0})  $. By monotonicity of $\Risk$ it follows that
\begin{equation*}
\forall F \in\F_1,\quad \1_{F}\Risk(X_{1} +\Risk(X_{0}))  \geq \1_{F}\Risk( Y_{1}  + \Risk(Y_{0} )).
\label{recur1}
\end{equation*}
Then $\Risk_{1}(X)\geq \Risk_{1}(Y)$. 

By recursion, 
$$ X_{t}(\omega) \leq Y_{t}(\omega) \Rightarrow \Risk_{t}(X)\geq \Risk_{t}(Y)\quad \mbox{for}\quad t=1,\ldots,T.$$
Consequently $\Risk_{t}$ is monotone.

\end{enumerate}
\end{proof}

\begin{definition}
Suppose that a dynamic recursive risk measure $\Risk_{t}$ is given.
The corresponding acceptable sets can be defined as: 
\begin{equation}
\mathcal{C}_{\Risk_{t}}\defegal \{  X\in L^{\mathrm{p}}(\Om,\F_T,\PP) \mid  \Risk_{t}(X)\leq 0 \}.
\end{equation}
\end{definition}
It is straightforward to see that  the above acceptable set can be derived from the acceptable set of the static risk measure.  Then, the acceptable sets associated with the dynamic recursive risk measure  $\Risk_{t}$ can be expressed via the  acceptable 
set of the static risk $\Risk$ and inherits its properties, as  in Proposition~\ref{propositiondynamic}.

\subsection{Markov modulated dynamic risk measures}

The aim of this section is to define dynamic risk measures which take into consideration the state of the economy
for a given  risk measure.
Let  $Z=\{ Z_{t}, t \geq 0 \} $ be a discrete-time, finite state Markov chain with state space $\mathcal{Z}\defegal
\{z_{1},\ldots,z_{N}  \}  $ defined on the complete probability space  $(\Omega,\F,\PP)$.
 The states of $Z$ are interpreted as different states of an economy.  Following~\citep{Elliott1994}, we shall represent the state space of $\mathcal{Z}$ as the set of unit vectors 
  $\mathcal{E}\defegal \{e_{1},\ldots,e_{N}  \}$ where  $e_{i}=(0,\ldots,0,1,0,\ldots,0)^{\top}\in\RR^{N} $ and
  $\top$ represents the transpose of a matrix or a vector.  $Z$ then has the following semi-martingale decomposition:
 \begin{equation}
 \label{markovchaine}
Z_{t}= A Z_{t-1} +M_{t}\in\RR^{N}\,,
 \end{equation}
where $A=[a_{ji}]_{i,j=1,\ldots,N}$ and $a_{ji}=\PP\big( Z_{t}=e_{j}|Z_{t-1}=e_{i}\big)$.
$A$ is the transition matrix of the Markov chain and $M_{t}$ is a martingale increment
 with respect to the filtration $\F^{Z}$ generated by $Z$.
 
  Assume that the stochastic processes $X$ and $Z$ are not independent. For instance, 
 assume that, at any time $t$, there exists an $N$-vectorial real valued random variable $\widetilde{X}_t$ which stands for $N$ possible random variables such that
\begin{equation}
X_t=<\widetilde{X}_t ,Z_{t+1} >,\quad t=0,\ldots,T.
\end{equation}

\begin{definition}
Suppose that the time index set is discrete and finite: $S\defegal \{0,\ldots,T \}$.
Let $\Risk_{t}\!\!: L^{\mathrm{p}}(\Om,\F_{T},\PP) \to L^{\mathrm{p}}(\Om,\F_{t},\PP)$ be a dynamic risk measure. We define the 
state based dynamic risk measure $\Risk_{t}^{Z}$ as
\begin{equation}
\Risk_{t}^{Z}(X) \defegal \EE_{\PP}\big[ \Risk_{t}(X)| \mathcal{F}_{t}\vee \mathcal{F}_{t}^Z   \big],\quad t=0,\ldots,T,
\end{equation}
where $(\mathcal{F}_{t}^Z)_{t=0,\ldots,T}$ is the filtration generated by the Markov Chain. 
\end{definition}

\subsubsection{Examples of Markov modulated dynamic risk measures}

 Suppose that the function $F$ is a real-valued function of the Markov chain $Z$. Then, $F$
 has a linear representation $F(Z_t)={\bf F}^{\top}Z_t$ where  ${\bf F}=({\bf F}_1,\ldots,{\bf F}_N)^{\top}$ and  ${\bf F}_i=F(e_i)$.
 The function $F$ can be interpreted as an aggregated value of the different states of the economy.  Different Markov modulated dynamic risk measures can be considered with respect to the definition of the underlying risk measure.

 \begin{definition} Let a  static risk measure $\Risk$ be given.
 Suppose that the dynamic risk measure is defined by:
\begin{equation}
\label{case112}
\Risk_{t}(X)=\Risk(X_{t})<{\bf F},Z_{t+1}>, \; t=0,\ldots,T, \footnote{We assume that $Z_{T+1}\defegal Z_{T}$ as the horizon time is $T$.}
\end{equation} 
where $<\cdot,\cdot>$ states for the scalar product in $\RR^N$.
 Then, the derived state based dynamic risk measure 
$\Risk^{Z}_{t}(X) \defegal \EE_{\PP}\big[ \Risk_{t}(X)|\mathcal{F}_t \vee \mathcal{F}_{t}^Z\big]$ is
\begin{equation}
\Risk^{Z}_{t}(X)=\Risk(X_t)<{\bf F},AZ_{t}>, \; t=1,\ldots,T,
\end{equation} 
 \end{definition}
 If  $\Risk_{t}(X)=\Risk(X_{t})<{\bf F},Z_{t+1}>$  then    
 
\begin{eqnarray}
\Risk^{Z}_{t}(X)&=& \EE\big[  <\Risk(X_{t}){\bf F},Z_{t+1}> | \mathcal{F}_t \vee \mathcal{F}_{t}^Z \big]\nonumber\\
&=& \Risk(X_t)<{\bf F},AZ_{t}> \nonumber. 
% \EE\Big[  \EE\big[  <\Risk(X_{t}){\bf F}^{\top},Z_{t+1}> | Z_{t}  \big] |Z_{t}  \Big]\nonumber. 
\label{eqrisk1}
\end{eqnarray}

If the stochastic processes $X$ and $Z$ are independent, we can still construct a Markov modulated dynamic risk measure  which depends on $Z$.  Let assume that any state of the economy can be associated to a particular risk measure. A possible interpretation of this statement is that the risk aversion of the decision maker depends on the state of economy. Hence, we shall consider, $N$-dimensional risk measures.  Let an  $N$-vectorial static risk measure $\overline{\Risk}$ be given, i.e. $\overline{\Risk}\defegal \big(\Risk^1,\ldots,\Risk^N\big)$ with
$\Risk^i\!\!:\mathcal{X}\to \overline{\RR} $, for $i=1,\ldots,N$.  Then, different Markov modulated dynamic risk measure can be derived.

\begin{definition} 
 Assume that the dynamic risk measure is given by:
\begin{equation}
\label{case1}
\Risk_{t}(X)=<\overline{\Risk}(X_{t})^{\top},Z_{t+1}>, \; t=0,\ldots,T. 
%\footnote{We assume that $Z_{T+1}\defegal Z_{T}$ as the horizon time is $T$.}
\end{equation} 
 Then, the derived Markov modulated dynamic risk measure 
$\Risk^{Z}_{t}(X) \defegal \EE_{\PP}\big[ \Risk_{t}(X)|\mathcal{F}_t \vee \mathcal{F}_{t}^Z\big]$ is
\begin{equation}
\Risk^{Z}_{t}(X)=<\overline{\Risk}(X_t),AZ_{t}>, \; t=1,\ldots,T,
\end{equation} 
where %$\phi_{j}(t)=<A{{\bf \bullet} j},\overline{\Risk}(X_t)>$,\quad $\phi(t)\defegal (\phi_{1}(t),\ldots,\phi_{N}(t))^{\top}$
  $\Risk^{Z}_{0}(X)\defegal \Risk (X_{0})$. 
\end{definition}

\begin{definition} 
 If the dynamic risk measure $\widetilde{\Risk}_{t}$ is given by:
\begin{equation}
\label{case2}
\widetilde{\Risk}_{t}(X)=<\overline{\Risk}_{t}(X)^{\top},Z_{t+1}>, \; t=0,\ldots,T, \; \mbox{where}\quad  
\overline{\Risk}_{t}(X)=\big(\Risk^{1}_{t},\ldots, \Risk^{N}_{t} \big), %\; \mbox{and}\\ \quad \Risk^{i}_{t}(X)=\Risk^{i}(X_t +\Risk_{t-1}^{i}(X)).
\end{equation} 
and $\Risk^{i}_{t}(X)=\Risk^{i}(X_t +\Risk_{t-1}^{i}(X))$, 
then the derived Markov modulated dynamic risk measure\\ $\widetilde{\Risk}^{Z}_{t}(X)=
\EE_{\PP}\big[ \widetilde{\Risk}_{t}(X)|\mathcal{F}_t \vee \mathcal{F}_{t}^Z\big]$ is
\begin{equation}
\widetilde{\Risk}^{Z}_{t}(X)=<\overline{\Risk}_{t}(X),A Z_{t}>, \; t=1,\ldots,T,
\end{equation} 
where %$\Phi_{j}(t)=<A_{{\bf \bullet j}},\overline{\Risk}_{t}(X)^{\top}>$,\quad $\Phi(t)\defegal (\Phi_{1}(t),\ldots,\Phi_{N}(t))^{\top}$ 
 $\widetilde{\Risk}^{Z}_{0}(X)\defegal \Risk (X_{0})$.
%\end{enumerate}
\end{definition}

\subsubsection{Properties of Markov modulated dynamic risk measures}
The Markov modulated dynamic risk measure $\Risk_{t}^{Z}$ inherits properties from the associated dynamic risk measure $\Risk_{t}$.
 \begin{proposition}
 \begin{enumerate}
 \item The state based dynamic risk measure $\Risk_{t}^{Z}$ is monotone  if the associated
 dynamic risk measure  $\Risk_{t}$ is monotone.
 \item The state based dynamic risk measure $\Risk_{t}^{Z}$ satisfies the local property if and only if
   $\Risk_{t}$ satisfies the local property on $\mathcal{F}_{t}\vee \mathcal{F}_t^{Z}$, where $\mathcal{F}_t^{Z}$
   stands for the filtration generated by the Markov chain $Z$.  
    \item The state based dynamic risk measure $\Risk_{t}^{Z}$ is invariant by translation if and only if
   $\Risk_{t}$ is invariant by translation on $\mathcal{F}_{t}\vee \mathcal{F}_t^{Z}$.
   \item  The state based dynamic risk measure $\Risk_{t}^{Z}$ is time consistent if 
  the associated dynamic risk measure $\Risk_{t}$ is time consistent.
 \end{enumerate}
 \end{proposition}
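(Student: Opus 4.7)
The four claims are unified by the observation that $\Risk_t^Z(X)=\EE_{\PP}[\Risk_t(X)\mid \mathcal{F}_t\vee\mathcal{F}_t^Z]$, so every property of $\Risk_t$ must be transported through the conditional expectation operator $\EE[\,\cdot\mid \mathcal{F}_t\vee\mathcal{F}_t^Z]$. The three tools I would repeatedly invoke are: monotonicity of conditional expectation, linearity together with the pull-out of $\mathcal{F}_t\vee\mathcal{F}_t^Z$-measurable factors, and the tower property for the nested $\sigma$-algebras $\mathcal{F}_s\vee\mathcal{F}_s^Z\subset \mathcal{F}_t\vee\mathcal{F}_t^Z$ whenever $s\leq t$.

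For monotonicity (item 1), if $X\leq Y$ then by the assumed monotonicity of $\Risk_t$ we have $\Risk_t(X)\geq \Risk_t(Y)$, and applying $\EE[\,\cdot\mid \mathcal{F}_t\vee\mathcal{F}_t^Z]$ to both sides preserves the inequality. For translation invariance (item 3), in the forward direction I start from $\Risk_t(X+m)=\Risk_t(X)-m$ for $m\in L^\infty(\Om,\mathcal{F}_t\vee\mathcal{F}_t^Z,\PP)$; taking the conditional expectation and pulling the measurable $m$ out yields $\Risk_t^Z(X+m)=\Risk_t^Z(X)-m$. The converse is obtained by evaluating the identity $\Risk_t^Z(X+m)-\Risk_t^Z(X)+m=0$ against a family of bounded $\mathcal{F}_t\vee\mathcal{F}_t^Z$-measurable test variables and using that the inner integrand, also being $\mathcal{F}_t\vee\mathcal{F}_t^Z$-measurable, must vanish $\PP$-a.s.

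The local property (item 2) follows the same template. For $A\in\mathcal{F}_t\vee\mathcal{F}_t^Z$, the forward direction rewrites $\Risk_t(1_A X+1_{A^c}Y)=1_A\Risk_t(X)+1_{A^c}\Risk_t(Y)$ inside the conditional expectation and pulls the indicators out. The reverse direction is the delicate half: one needs to deduce the pointwise identity for $\Risk_t$ from the one satisfied by its conditional expectation. I would argue this by varying $A$ over a generating class in $\mathcal{F}_t\vee\mathcal{F}_t^Z$ and using that equality of conditional expectations for every such indicator forces the defining identity to hold $\PP$-a.s.\ on the conditioning $\sigma$-algebra.

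The main obstacle is time consistency (item 4). The naive transfer fails because the hypothesis $\Risk_t^Z(X)\leq \Risk_t^Z(Y)$ is only a conditional-expectation inequality and does \emph{not} yield the pointwise inequality $\Risk_t(X)\leq \Risk_t(Y)$ that time consistency of $\Risk_t$ requires as input. My plan is to bypass this by combining the tower property with the acceptance-set reformulation recalled earlier in the paper: represent the time-consistency condition through the acceptance sets $\mathcal{C}_{\Risk_t}$, transport acceptance through conditional expectation (which requires showing that $\mathcal{C}_{\Risk_t}\subset \mathcal{C}_{\Risk_t^Z}$ thanks to monotonicity and normalization), and then conclude by a two-step conditioning $\EE[\,\cdot\mid\mathcal{F}_s\vee\mathcal{F}_s^Z]=\EE[\EE[\,\cdot\mid\mathcal{F}_t\vee\mathcal{F}_t^Z]\mid\mathcal{F}_s\vee\mathcal{F}_s^Z]$. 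I expect this step to demand either an additional measurability hypothesis that makes the conditional expectation injective on the relevant cone, or a recursive reformulation of time consistency in the spirit of the Bellman principle cited in the paper, and this is where essentially all of the technical work of the proof will concentrate.
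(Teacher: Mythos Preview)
Your treatment of items 1--3 matches the paper exactly: the paper dismisses them in one line (``the other properties result immediately from the conditional expectation properties''), which is precisely the monotonicity/linearity/pull-out package you describe.

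For item 4, however, the paper's argument is far simpler than what you propose, and in fact does \emph{not} engage with the obstacle you identify. The paper starts from the hypothesis $\Risk_t(X)\leq\Risk_t(Y)$ (not $\Risk_t^Z(X)\leq\Risk_t^Z(Y)$), applies conditional expectation to obtain $\Risk_t^Z(X)\leq\Risk_t^Z(Y)$, then invokes time consistency of $\Risk_t$ to get $\Risk_{t-1}(X)\leq\Risk_{t-1}(Y)$, and conditions again to conclude $\Risk_{t-1}^Z(X)\leq\Risk_{t-1}^Z(Y)$. The passage from the $\Risk_t^Z$-inequality back to the $\Risk_t$-inequality is written in the paper as a biconditional ($\Leftrightarrow$), but only the forward direction is justified; the reverse is exactly the step you flag as failing in general. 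So your instinct is correct: the difficulty is real, and the paper simply does not address it. Your acceptance-set/tower-property plan is therefore more ambitious than anything in the paper's own proof, and whether it can be made to work without an extra injectivity or measurability assumption is, as you anticipate, the crux---one that the paper leaves open.
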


 \begin{proof} 
 We shall demonstrate the time consistent property. The other properties result  immediately from the conditional expectation properties.
Suppose that the dynamic risk measure associated with the Markov modulated dynamic risk,  $\Risk_{t}$ is time consistent.

Suppose that $\forall\,X,Y\in L^{\mathrm{p}}(\Om,\F_{T},\PP),\; \Risk_{t}(X)\leq \Risk_{t}(Y)$. By monotonicity of the conditional expectation operator, it follows
 
 $$\EE_{\PP}\big[\Risk_{t}(X)|\mathcal{F}_t \vee \mathcal{F}_{t}^Z\big] \leq  \EE_{\PP}\big[\Risk_{t}(Y)|\mathcal{F}_t \vee \mathcal{F}_{t}^Z\big]  \Longleftrightarrow \Risk^{Z}_{t}(X)\leq \Risk^{Z}_{t}(Y), \; \PP.\,a.s. $$
 
  By time consistency of $ (\Risk_{t})_{t=1,\ldots,T}$, we have
$$  \Risk_{t}(X)\leq \Risk_{t}(Y)   \Rightarrow     \Risk_{t-1}(X)\leq \Risk_{t-1}(Y), \; \PP.\,a.s.$$ It follows that 
$ \EE_{\PP}\big[\Risk_{t-1}(X)|\mathcal{F}_{t-1} \vee \mathcal{F}_{t-1}^Z\big] \leq 
 \EE_{\PP}\big[\Risk_{t-1}(Y)|\mathcal{F}_{t-1} \vee \mathcal{F}_{t-1}^Z\big] $. Then
$\Risk^{Z}_{t-1}(X)\leq \Risk^{Z}_{t-1}(Y)$. Hence, $\Risk_{t}^{Z}$ is time consistent.
 \end{proof}

\section{Application to Value-at-Risk and Conditional Value-at-Risk}
\label{applic}

Closed formulas of recursive risk measures and Markov modulated risk measures of the Value-at-Risk and the Conditional Value-at-Risk
are derived from the lemmas obtained in the previous section.  An application is then given when considering Gaussian and Weibull returns. The recursive Value-at-Risk gives closed formulas easy to compute in the sense, that at any time $t$, only the parameters of the distributions up to time $t$ are to be considered. 
The Markov modulated risk measures appears to be relatively small and stable,  which showed a benefit in term of capital management. 

\subsection{Value-at-Risk}
Value-at-Risk, $\VaR$, is the most widespread market risk measure used by the financial institutions.
In $1995$, the leading 10 banks in the OECD (Basel  committee) recommended this risk measure; the Basel II committee
advocated $\VaR$ as a standard risk measure.  
\begin{definition}
The Value-at-Risk associated with the random variable $X$ with value in $\RR$
at a level $p\in ]0,1[$ is defined as the smallest $p$-quantile of $X$:
\begin{equation}
\label{eq:var}
\VaR^{p}(X)\defegal
  \inf\left\{\eta\in\RR : \psi_X(\eta)\geq p\right\},
\end{equation}
where $\psi_X(\eta)\defegal \PP (X\leq\eta)$ defines the cumulative distribution function of $X$.
\end{definition}
The above minimum is reached because the function
$\psi_X$ is continuous on the left and non-decreasing. 
If $\psi_{X}$ is continuous and 
 increasing, then $\eta=\VaR^{p}(X)$
is the only solution of the equation  $\psi_{X}(\eta)=p$. 
Otherwise, this latter equation could have an infinity of solutions, (when the density
of $X$ is null for some value) or no solution at all, (when the density of probability is discrete).

For example, to say that the $\VaR^{99\%}$ of a portfolio  is  equal to 
 $\$100$ means that the maximal loss of value of the portfolio is less than a $\$100$ 
 with a probability of $99\%$.
Unfortunately, as a quantile, the $\VaR$ does not take into account extreme events and it is not sub-additive.
Diversification does not necessarily decrease  Value-at-Risk. Consequently,  $\VaR$ is not coherent or convex\footnote{The Value-at-Risk is used as an standard for  market risk but when using $\VaR$  diversification 
can fail to decrease the risk, as mentioned.}.
We  extend the static Value-at-Risk defined in~\eqref{eq:var} to the dynamic framework introduced
 in Section~\ref{dynamicmodel}. %~\eqref{dynamic1}. 
  In practice, three methods are used to compute $\VaR$: the historical method, the parametric method and 
 the Monte-Carlo method.  Each of them presents advantages and drawbacks. When considering the parametric method, the static 
 $\VaR$ can be extend to a dynamic framework under the hypothesis of normal distributions for the underling factors by  time series 
 estimation methods.

 In fact, if at time $t$, the random variable $X_{t} \sim \mathcal{N}(\mu_{t},\sigma_{t})$ then
 \begin{equation}
 \label{varnormal_t}
 \VaR^{p}(X_{t})\defegal \mu_{t} +\sigma_{t} q_{1-p},
 \end{equation}
where $q_{1-p}$  stands for the quantile of the standard normal distribution. Then, the value-at-risk associated with $X_{t+1}$ can be computed
once an  estimation method is set to determine $\mu_{t+1}$ and $\sigma_{t+1}$ knowing $\mu_{t}$ and $\sigma_{t}$ respectively. 
The case where the random variable $X_{t}$ follows a Weibull distribution will also  be discussed.  The Weibull distribution with shape parameter less than $1$
is a common heavy tail distribution which can  approximate many distribution functions. In practice, given a set of data,
it is often possible to find a Weibull distribution that fits the data  when the parameters of the Weibull distribution are calibrated, based on the variance and the mean of the
data. Recall that,  if $X_{t}$ is  a generalized Weibull distribution, with parameters $\lambda_{t}>0$, $\alpha_{t}>0$ and $\theta_{t}$,  
i.e. $X_{t}\sim\mathcal{W}(\lambda_{t},\alpha_{t};\theta_{t})$,
the associated cumulative distribution function $\psi_{X_{t}}$, is given by: (for $\eta>0$)

\begin{equation}
\label{eq:cumwei}
\psi_{X_{t}}(\eta)\defegal
  \left\{
\begin{array}{rl}
 1-e^{-\big(\frac{\eta-\theta_{t}}{\lambda_{t}}\big)^{\alpha_{t}}}, &                            \quad \mbox{if} \quad \eta \geq \theta_{t} , \\[2mm]
0, &      \quad\mbox{elsewhere}.
\end{array}
\right.
\end{equation}
$ \VaR^{p}(X_{t})$, the value-at-risk associated with $X_{t}$ is defined by $p=\psi_{X_{t}}\big( \VaR^{p}(X_{t}\big)$. Then
\begin{equation}
 \VaR^{p}(X_{t})=\displaystyle \theta_{t} + \lambda_{t}\Big(-\ln (1-p)\Big)^{\frac{1}{\alpha_{t}}}.
\end{equation}

Define $X\defegal (X_{0},X_{1},\ldots,X_{T})$.
\begin{lemme}
\label{lemme:gauss-weib}
Suppose that the random return $(X_{1},\ldots,X_{T})$ are independent and identically distributed.
\begin{enumerate}
 \item Suppose that $X_{t} \sim \mathcal{N}(\mu_{t},\sigma_{t})$
and $\VaR^{p}_{0}(X)=\VaR^{p}(X_{0})$.
The dynamic value-at risk based on recursion associated with $X$ is:
\begin{equation}
\VaR^{p}_{t}(X)=\displaystyle\sum_{k=1}^{t} (-1)^{t-k}\,(\mu_k+\sigma_{k}q_{1-p}) + (-1)^{t}\,(\mu_0+\sigma_{0}q_{1-p}).
\end{equation}

\item Suppose that $X_{t}\sim\mathcal{W}(\lambda_{t},\alpha_{t};\theta_{t})$  such that the cumulative distribution function  associated with  $X_{t}$
is defined in~\eqref{eq:cumwei}, 
and $\VaR^{p}_{0}(X)=\VaR^{p}(X_{0})$.
The dynamic value-at risk based on recursion associated with $X$ is:
\begin{equation}
\label{dy_var_recur_w}
\VaR^{p}_{t}(X)=
\displaystyle \sum_{k=1}^{t}(-1)^{t-k}\,\Big(\theta_{k}+\lambda_{k}\big(-\ln (1-p)\big)^{\frac{1}{\alpha_{k}}}\Big) 
+ (-1)^{t} \Big(  \theta_{0}+\lambda_{0}\Big(-\ln (1-p)\Big)^{\frac{1}{\alpha_{0}}} \Big).
\end{equation}

\end{enumerate}

\end{lemme}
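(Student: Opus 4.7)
The plan is to unfold the recursion defining $\VaR_t^p(X)$ by means of the translation-invariance property of the static Value-at-Risk. Taking the indicator to be $1$ (i.e.\ $F=\Om$), Definition~\ref{defdynamic1} reads
\[
\VaR_t^p(X) = \VaR^p\!\bigl(X_t + \VaR_{t-1}^p(X)\bigr), \qquad \VaR_0^p(X) = \VaR^p(X_0).
\]
Under the independence assumption the iterate $\VaR_{t-1}^p(X)$ is a deterministic real number, built recursively from the marginal quantiles of $X_0,\ldots,X_{t-1}$, so it enters $\VaR^p(\cdot)$ as a constant shift. Property $P_2$ then gives the one-step linear recurrence
\[
\VaR_t^p(X) = \VaR^p(X_t) - \VaR_{t-1}^p(X).
\]

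Setting $a_t \defegal \VaR_t^p(X)$ and $b_t \defegal \VaR^p(X_t)$, the problem reduces to solving $a_t = b_t - a_{t-1}$ with $a_0 = b_0$. I would proceed by induction on $t$ to establish
\[
a_t = \sum_{k=1}^{t} (-1)^{t-k} b_k + (-1)^{t} b_0.
\]
The base case $t=0$ is just the initial condition; the inductive step is immediate, since substituting the hypothesis at stage $t-1$ into $a_t = b_t - a_{t-1}$ and reindexing produces exactly the claimed expression, the alternating signs arising from the factor of $-1$ multiplying $a_{t-1}$.

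It then suffices to insert the explicit quantile formulas. For part~1, the Gaussian identity~\eqref{varnormal_t} gives $b_k = \mu_k + \sigma_k q_{1-p}$, and substitution recovers the first claim. For part~2, inverting the Weibull cumulative distribution~\eqref{eq:cumwei} by solving $1 - \exp\bigl(-\bigl((\eta-\theta_k)/\lambda_k\bigr)^{\alpha_k}\bigr) = p$ yields $b_k = \theta_k + \lambda_k\bigl(-\ln(1-p)\bigr)^{1/\alpha_k}$, and substitution into the closed form produces~\eqref{dy_var_recur_w}. The step I expect to require the most care is the translation reduction: one must confirm that $\VaR_{t-1}^p(X)$ really is a deterministic cash amount (so that $P_2$ applies to the static $\VaR^p$ verbatim), which is exactly what the independence hypothesis guarantees. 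Once this is secured, the remainder of the argument is a routine induction plus substitution.
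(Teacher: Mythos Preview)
Your proposal is correct and follows essentially the same route as the paper: both arguments unfold the recursion via translation invariance to obtain the one-step relation $\VaR^{p}_{t}(X)=\VaR^{p}(X_t)-\VaR^{p}_{t-1}(X)$, solve it by induction to the alternating-sign closed form, and then substitute the Gaussian and Weibull quantile expressions. Your remark that independence forces $\VaR^{p}_{t-1}(X)$ to be a deterministic constant (so that $P_2$ applies verbatim) is a point the paper leaves implicit.
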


\begin{proof} 
 For $t=1$, 
$$\VaR^{p}_{1}(X)=\VaR^{p}(X_{1}+\VaR^{p}_{0}(X))=\VaR^{p}\big(X_{1}+\VaR^{p}(X_{0})\big).$$
Using the invariance by translation property, it follows that 
$$\VaR^{p}_{1}(X)=\VaR^{p}(X_{1})-\VaR^p(X_0) .$$
By induction, for any given time $t\geq 1$ and for any natural number $n\leq t$, $\VaR^{p}_{t}(X)$ is given by:
\begin{equation}
\VaR^{p}_{t}(X)=\displaystyle \sum_{k=t-n+1}^{t} (-1)^{t-k}\,\VaR^{p}(X_k) + (-1)^{n}\,\VaR^{p}_{t-n}(X).
\end{equation}
 For $t=n$, we end up with the closed formula
 \begin{equation}
 \label{eq:closedformula}
\VaR^{p}_{t}(X)=\displaystyle \sum_{k=1}^{t} (-1)^{t-k}\,\VaR^{p}(X_k) + (-1)^{t}\,\VaR^{p}_{0}(X).
\end{equation}
%+ \mu_{0} +\sigma_{0} q_{1-p}\big).$$ 
\begin{enumerate}
 \item If  $X_{k} \sim \mathcal{N}(\mu_{k},\sigma_{k})$, then  $\VaR^{p}(X_k)= \mu_{k}+\sigma_{k}q_{1-p}$. 
 By definition, $\VaR^{p}_{0}(X)=\VaR^p(X_0)$.
 By substituting $\VaR^{p}(X_k)$ by its value in~\eqref{eq:closedformula}, we obtain the desired result.
%By induction, the desired result is obtained, for any discrete time $t >0$.
\item If $X_{k}\sim\mathcal{W}(\lambda_{k},\alpha_{k};\theta_{k})$, then $\VaR^{p}(X_k)= \theta_{k}+\lambda_{k}\big(-\ln (1-p)\big)^{\frac{1}{\alpha_{k}}} $. Hence by substituting $\VaR^{p}(X_k)$ by its value in~\eqref{eq:closedformula}, 
the equality~\eqref{dy_var_recur_w} is  obtained.

\end{enumerate}
\end{proof}

The dependence between the random returns $X_t$ and the state of the economy represented by the Markov chain can defined through 
the parameters of the density functions of the random returns, if it exists. For instance, 
\begin{enumerate}
\item if $X_{t} \sim \mathcal{N}(\mu_{t},\sigma_{t})$,  then $\mu_t\defegal <\mu,Z_{t+1}>\,\mbox{and}\,\sigma_t\defegal <\sigma,Z_{t+1}>$  with $\mu,\,\sigma\in\RR^N$; 
\item if $X_{t}\sim\mathcal{W}(\lambda_{t},\alpha_{t};\theta_{t})$, then $\lambda_t\defegal <\lambda,Z_{t+1}>,\;\alpha_t\defegal <\alpha,Z_{t+1}>$
and $ \theta_t\defegal <\theta,Z_{t+1}>$ with $\lambda,\,\alpha,\,\theta\in\RR^N$.
\end{enumerate}

\begin{lemme} 
\label{lemme:dynamicvar}
Suppose that  the returns $X_{t}$ are Gaussian and  independent where 
$\mu_t\defegal <\mu,Z_{t+1}>$ and $\sigma_t\defegal <\sigma,Z_{t+1}>$  with $\mu,\,\sigma\in\RR^N$.
Suppose the dynamic value-at-risk is driven by the recursion formula.
 Then the Markov modulated dynamic value-at-risk  is 
 \begin{equation}
  \Big(\VaR^{p}_{0}\Big)^{Z}(X)\defegal \VaR^{p}(X_{0}) \quad \mbox{and}
  \end{equation} 
\begin{equation}
\Big(\VaR^{p}_{t}\Big)^{Z}(X)= \displaystyle\sum_{k=1}^{t} (-1)^{t-k}\,(\overline{\mu}_k+\overline{\sigma}_{k}q_{1-p}) +
 (-1)^{t}\,(\mu_0+\sigma_{0}q_{1-p}),
\end{equation}
where
%\begin{equation}
$$
\overline{\mu}_k=<\mu A^{\top},Z_{k}>\;\mbox{and}\;\; \overline{\sigma}_k=<\sigma A^{\top},Z_{k}>.
$$
%\end{equation}

\end{lemme}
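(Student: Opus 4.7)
The plan is to combine the closed-form recursive Value-at-Risk established in Lemma~\ref{lemme:gauss-weib} with the semi-martingale decomposition~\eqref{markovchaine} of the Markov chain $Z$, and then conclude by linearity of the conditional expectation and the one-step Markov property used in the derivation of~\eqref{eqrisk1}.

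First, I would invoke the Gaussian case of Lemma~\ref{lemme:gauss-weib} to obtain
$$
\VaR^{p}_{t}(X)=\sum_{k=1}^{t}(-1)^{t-k}(\mu_{k}+\sigma_{k}q_{1-p})+(-1)^{t}(\mu_{0}+\sigma_{0}q_{1-p}),
$$
and then substitute the Markov-chain parametrisation $\mu_k\defegal<\mu,Z_{k+1}>$ and $\sigma_k\defegal<\sigma,Z_{k+1}>$. The formula is entirely affine in the random vectors $Z_{k+1}$ (the scalars $q_{1-p}$, $\mu$, $\sigma$, and the signs $(-1)^{t-k}$ are deterministic), so the conditional expectation $\EE_{\PP}[\,\cdot\mid\mathcal{F}_{t}\vee\mathcal{F}_{t}^{Z}\,]$ passes inside the sum by linearity.

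Second, I would treat each term $\EE_{\PP}[<\mu,Z_{k+1}>\mid\mathcal{F}_{t}\vee\mathcal{F}_{t}^{Z}]$ in exactly the same fashion as the derivation of~\eqref{eqrisk1}. Using the decomposition $Z_{k+1}=AZ_{k}+M_{k+1}$, where $M_{k+1}$ is a martingale increment with respect to $\mathcal{F}^{Z}$, and pulling the deterministic vector $\mu$ out of the scalar product, one obtains
$$
\EE_{\PP}\bigl[<\mu,Z_{k+1}>\mid\mathcal{F}_{t}\vee\mathcal{F}_{t}^{Z}\bigr]=<\mu,AZ_{k}>=<\mu A^{\top},Z_{k}>=\overline{\mu}_{k},
$$
and symmetrically $\EE_{\PP}[<\sigma,Z_{k+1}>\mid\mathcal{F}_{t}\vee\mathcal{F}_{t}^{Z}]=\overline{\sigma}_{k}$. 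Reassembling these evaluations term by term and retaining the $(-1)^{t-k}$ signs directly yields the announced closed form, with the $k=0$ boundary term $(\mu_{0}+\sigma_{0}q_{1-p})$ kept intact because $(\VaR^{p}_{0})^{Z}(X)$ is fixed by the initialisation hypothesis to $\VaR^{p}(X_{0})$.

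The main obstacle is conceptual rather than technical: for indices $k<t$ the random vector $Z_{k+1}$ is already $\mathcal{F}_{t}^{Z}$-measurable, so a purely mechanical conditional expectation would leave it unchanged. The resolution follows the convention adopted in Section~\ref{newformul}: the Markov modulated risk measure systematically substitutes each state-dependent parameter $<\mu,Z_{k+1}>$ by its one-step forecast $<\mu A^{\top},Z_{k}>$, exactly as in the passage from~\eqref{case112} to equation~\eqref{eqrisk1}. Once this step-by-step application of the Markov property is accepted, the remainder of the argument is just an affine substitution into the recursion of Lemma~\ref{lemme:gauss-weib}, which is why I expect the proof to be short once the notational setup is in place.
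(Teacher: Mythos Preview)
Your proposal is correct and follows the same approach as the paper, whose proof is a single line invoking independence of the Gaussian returns and the identity $\EE[Z_{t+1}\mid Z_t]=AZ_t$; you have simply spelled out the substitution into Lemma~\ref{lemme:gauss-weib} and the linearity argument that the paper leaves implicit. Your closing paragraph even flags a measurability subtlety (that $Z_{k+1}$ is $\mathcal{F}_t^Z$-measurable for $k<t$) which the paper does not address, and your resolution via the term-by-term one-step convention of~\eqref{eqrisk1} is the natural reading.
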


\begin{proof}
 The $X_{t}$ are Gaussian and are independent,
and $\EE[Z_{t+1}\mid Z_t]=AZ_t$.
\end{proof}

\begin{lemme}
Suppose that $X_{t}\sim\mathcal{W}(\lambda_{t},\alpha_{t};\theta_{t})$ where 
$\lambda_t\defegal <\lambda,Z_{t+1}> $ % ,\;\alpha_t\defegal <\alpha,Z_{t+1}>$
and $ \theta_t\defegal <\theta,Z_{t+1}>$ with $\lambda,\,\theta\in\RR^N$. % and $\alpha_t\in\RR$.
 Suppose that the dynamic value-at-risk  is driven by the recursion
formula, \emph{i.e.},  
$$\VaR^{p}_{t}(X)=
\displaystyle \sum_{k=1}^{t}(-1)^{t-k}\,\Big(\theta_{k}+\lambda_{k}\big(-\ln (1-p)\big)^{\frac{1}{\alpha_{k}}}\Big) 
+ (-1)^{t} \Big(  \theta_{0}+\lambda_{0}\Big(-\ln (1-p)\Big)^{\frac{1}{\alpha_{0}}} \Big).
$$
Then the Markov modulated dynamic value-at-risk  is
\begin{equation*}
  \Big(\VaR^{p}_{0}\Big)^{Z}(X)\defegal \VaR^{p}(X_{0}) \quad \mbox{and}
  \end{equation*} 
\begin{equation}
\Big(\VaR^{p}_{t}\Big)^{Z}(X)=
\sum_{k=1}^{t}(-1)^{t-k}\,\Big(\overline{\theta}_{k}+\overline{\lambda}_{k}\EE\big[-\big(\ln (1-p)\big)^{\frac{1}{\alpha_{k}}}\mid Z_k\big]\Big) 
+ (-1)^{t} \Big( \theta_{0}+\lambda_{0}\Big(-\ln (1-p)\Big)^{\frac{1}{\alpha_{0}}} \Big),
 \end{equation}
where $\overline{\lambda}_k=<\lambda A^{\top},Z_{k}>$ and $ \overline{\theta}_k=<\theta A^{\top},Z_{k}>$.

\end{lemme}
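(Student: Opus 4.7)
The plan is to mimic the argument used in the Gaussian lemma immediately preceding this one: start from the explicit closed form of $\VaR^{p}_t(X)$ given by the recursion (Lemma~2, part~2), apply the conditioning $\mathbb{E}[\cdot\mid\mathcal{F}_t\vee\mathcal{F}_t^{Z}]$ that defines the Markov-modulated risk measure, and simplify term by term using linearity of conditional expectation together with the Markov semi-martingale decomposition $Z_{t+1}=AZ_t+M_{t+1}$, which gives $\mathbb{E}[Z_{k+1}\mid Z_k]=AZ_k$.

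First, I would isolate the $k=0$ summand: $(-1)^{t}(\theta_0+\lambda_0(-\ln(1-p))^{1/\alpha_0})$ is deterministic (set at the initial time), hence it is $\mathcal{F}_0$-measurable and leaves the conditional expectation unchanged, yielding exactly the last term on the right-hand side of the claimed identity. Next, for each $k\in\{1,\dots,t\}$ I would write $\theta_k=\langle\theta,Z_{k+1}\rangle$ and $\lambda_k=\langle\lambda,Z_{k+1}\rangle$ (following the dependence convention introduced right before the Gaussian lemma). By linearity of the inner product and of conditional expectation,
\begin{equation*}
\mathbb{E}\bigl[\langle\theta,Z_{k+1}\rangle\mid Z_k\bigr]=\langle\theta,AZ_k\rangle=\langle\theta A^{\top},Z_k\rangle=\overline{\theta}_k,
\end{equation*}
and the analogous identity holds for $\overline{\lambda}_k$. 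The tower property with $\mathcal{F}_k^{Z}\subseteq\mathcal{F}_t\vee\mathcal{F}_t^{Z}$ reduces the relevant conditional expectation of each Weibull quantile term to its one-step Markov transition, exactly as was done in the Gaussian case.

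The one genuinely delicate point — and the step I would flag as the main obstacle — is the term $\lambda_k(-\ln(1-p))^{1/\alpha_k}$. Because $\alpha_k$ is allowed to depend on $Z_{k+1}$, the factor $(-\ln(1-p))^{1/\alpha_k}$ is itself a function of $Z_{k+1}$ and \emph{is not} independent of $\lambda_k=\langle\lambda,Z_{k+1}\rangle$, so one cannot split the product into a product of conditional expectations. However, one can still factor out $\overline{\lambda}_k$ by first writing $\lambda_k=\langle\lambda,Z_{k+1}\rangle$, conditioning on $Z_k$, and invoking the Markov property componentwise on the $N$ possible realisations of $Z_{k+1}$; the remaining random factor is preserved inside the conditional expectation $\mathbb{E}[-(\ln(1-p))^{1/\alpha_k}\mid Z_k]$, which is precisely how the stated formula leaves it.

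Finally, I would recombine the summands with the alternating signs $(-1)^{t-k}$ to recover the announced closed form for $(\VaR^{p}_t)^{Z}(X)$, and verify the initialisation $(\VaR^{p}_0)^{Z}(X)=\VaR^{p}(X_0)$ separately from the definition, since the sum is empty at $t=0$. The case $t=1$ serves as a sanity check that the conditioning against $\mathcal{F}_t^{Z}$ and the Markov transition produce the $A^{\top}$ appearing in $\overline{\theta}_k,\overline{\lambda}_k$ rather than $A$ itself.
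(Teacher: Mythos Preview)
Your overall strategy is exactly the paper's: the paper's proof reads in full ``The proof is immediate,'' meaning one repeats verbatim the conditioning argument of the Gaussian lemma using $\EE[Z_{k+1}\mid Z_k]=AZ_k$ and linearity. Your write-up is a correct fleshed-out version of that.

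One point needs correcting, however. You treat the term $\lambda_k(-\ln(1-p))^{1/\alpha_k}$ as if $\alpha_k$ were also a function of $Z_{k+1}$, flag this as the ``main obstacle,'' and then claim that one can nevertheless factor out $\overline{\lambda}_k=\langle\lambda A^{\top},Z_k\rangle$ by ``invoking the Markov property componentwise.'' That step is not valid: if both $\lambda_k$ and $(-\ln(1-p))^{1/\alpha_k}$ were functions of $Z_{k+1}$, then
\[
\EE\big[\lambda_k\,(-\ln(1-p))^{1/\alpha_k}\mid Z_k\big]
=\sum_{i=1}^{N}\lambda_i\,(-\ln(1-p))^{1/\alpha_i}\,\langle A e_i,Z_k\rangle,
\]
which in general is \emph{not} $\overline{\lambda}_k\cdot\EE\big[(-\ln(1-p))^{1/\alpha_k}\mid Z_k\big]$. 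The resolution is simpler than you suggest: read the hypotheses again. In this lemma only $\lambda_t$ and $\theta_t$ are declared to be of the form $\langle\cdot,Z_{t+1}\rangle$; no Markov dependence is imposed on $\alpha_t$. Hence $(-\ln(1-p))^{1/\alpha_k}$ is $\F_k$-measurable (indeed deterministic under the paper's conventions), so it pulls straight out of the conditional expectation and the factorisation $\overline{\lambda}_k\cdot\EE[\,\cdot\mid Z_k]$ is trivial. With that reading your argument goes through without the problematic step.
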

\begin{proof}
The proof is  immediate.
\end{proof}

\subsection{Conditional Value-at-Risk}%as an alternative to VaR}
Conditional Value-at-Risk has been introduced in the financial risk literature to overcome the limitations of the $\VaR$;
see~\citep{Delbaen1997,Embrechts1999}.
\begin{definition} 
Suppose the random variable $X\in\mathcal{X}$ has a probability density. Then
the $\cvar$ associated is the conditional expectation defined by:
\begin{equation}
\cvar^{p}(X) \defegal
  \EE\big[X \mid X\geq VaR^{p}(X)\big].
\end{equation}
\end{definition}
The following theorem allows us to rewrite the $\cvar$ as a solution of an optimization problem.
Denote  $l_+=\max(l,0)$.  Define
\begin{equation}
 F_{p}(X,\eta) = \EE\big[\eta+\frac{1}{1-p}(X-\eta)_+\big],
\end{equation}
for any random variable $X$.
\begin{theorem}
\label{theocvar}
For any random variable $X$ such that $\EE[X]<+\infty$, the function
$F_{p}(X,\cdot)$ is convex, 
continuously differentiable and 
\begin{equation}
 \cvar^{p}(X)=\min_{\eta\in\RR}F_{p}(X,\eta)\, .
\label{eq:ury}
\end{equation}
The set of solutions of the above optimization problem
\begin{equation}
A_{p}(X)=\displaystyle \argmin_{\eta\in\RR}F_{p}(X,\eta)
\end{equation}
is non-empty, closed and bounded. Then,
\begin{equation*}
VaR^{p}(X)\in \displaystyle
\argmin_{\eta\in\RR}F_{p}(X,\eta)\,\,\mbox{ and }\,\,
\cvar^{p}(X)=F_{p}\big(X,VaR^{p}(X)\big).
\end{equation*}
\label{theo:cvar}
\end{theorem}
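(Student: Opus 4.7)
The plan is to follow the classical Rockafellar--Uryasev argument, exploiting the assumption (inherited from the definition of $\cvar$) that $X$ has a density, so $\psi_X$ is continuous. For convexity I would note that for each fixed $x \in \RR$ the map $\eta \mapsto \eta + \frac{1}{1-p}(x-\eta)_+$ is the pointwise maximum of the two affine functions $\eta \mapsto \eta$ and $\eta \mapsto (x - p\eta)/(1-p)$, hence convex in $\eta$; since convexity is preserved under expectation, $F_p(X, \cdot)$ is convex on $\RR$.

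For continuous differentiability I would rewrite
\[
F_p(X, \eta) = \eta + \frac{1}{1-p}\, \EE\big[(X - \eta)\,\mathbf{1}_{\{X > \eta\}}\big]
\]
and differentiate under the expectation via dominated convergence, using an integrable envelope for the integrand valid on compact $\eta$-neighbourhoods (which exists since $\EE[X]<+\infty$). Because $\PP(X = \eta) = 0$ for every $\eta$ under the density assumption, the boundary contribution vanishes and one obtains
\[
\frac{d}{d\eta} F_p(X, \eta) \;=\; 1 - \frac{1}{1-p}\,\PP(X > \eta) \;=\; 1 - \frac{1 - \psi_X(\eta)}{1-p}.
\]
Continuity of $\psi_X$ (again from the density hypothesis) then yields continuous differentiability of $F_p(X, \cdot)$.

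Next I would locate the minimum. Setting the above derivative to zero gives $\psi_X(\eta) = p$, which by~\eqref{eq:var} is satisfied at $\eta = \VaR^p(X)$; convexity promotes this stationary point to a global minimum, so $\VaR^p(X) \in A_p(X)$. To identify the minimum value I would substitute $\eta = \VaR^p(X)$ and use $\PP(X \geq \VaR^p(X)) = 1 - p$ (true because $X$ has a density) to compute
\[
\tfrac{1}{1-p}\, \EE\big[(X - \VaR^p(X))_+\big] \;=\; \EE\big[X \mid X \geq \VaR^p(X)\big] - \VaR^p(X),
\]
so that $F_p(X, \VaR^p(X)) = \EE[X \mid X \geq \VaR^p(X)] = \cvar^p(X)$. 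The set $A_p(X)$ is then non-empty (by the previous line), closed as the preimage of the single point $\{\cvar^p(X)\}$ under the continuous function $F_p(X, \cdot)$, and bounded because $F_p(X, \eta) \to +\infty$ both as $\eta \to +\infty$ (the $\eta$ term dominates while $\EE[(X-\eta)_+] \to 0$) and as $\eta \to -\infty$ (the term $(X-\eta)_+/(1-p)$ behaves like $-\eta/(1-p)$, with coefficient $1/(1-p) > 1$).

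The step I expect to require the most care is the interchange of derivative and expectation: this is precisely where the density hypothesis is used, to kill the $\{X = \eta\}$ contribution and to guarantee continuity of $\psi_X$. Without the density one would only obtain one-sided derivatives, the first-order condition would become a set-valued inclusion, and $A_p(X)$ could degenerate into a non-trivial interval precisely when $\psi_X$ is flat at level $p$; the standing hypothesis excludes this pathology and lets the argument close cleanly.
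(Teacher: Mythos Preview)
Your proposal is correct and follows precisely the classical Rockafellar--Uryasev argument; the paper does not supply an independent proof but simply refers the reader to~\citep{RockUry:risk}, so your sketch is in fact a faithful expansion of the very source the paper cites.
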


\begin{proof} See~\citep{RockUry:risk}.
\end{proof}

With this theorem,  $\cvar$ can be computed without any knowledge of Value-at-Risk.

\begin{lemme}
\label{lemme:dynamiccvar}
Suppose that the random return $(X_{1},\ldots,X_{t})$ are independent and identically distributed.
\begin{enumerate}
\item Suppose that  the return at time $t$, $X_{t} \sim \mathcal{N}(\mu_{t},\sigma_{t})$.
The dynamic conditional value-at risk based on recursion associated with $X$ is:
\begin{equation}
\cvar^{p}_{0}(X)=\cvar^{p}(X_{0}),\quad \mbox{and for}\quad t=1,\ldots,T 
\end{equation}
\begin{equation}
\cvar^{p}_{t}(X)= 
  \left\{
\begin{array}{rl}
& \mu_{t}+\sigma_{t}q_{1-p} -\cvar^{p}_{t-1}(X),  \; \mbox{if} \;  X_{t}\leq \VaR^{p}(L_t)-
2\cvar^{p}_{t-1}(X) ,\, \PP. \;a.s  \\[4mm]
&\mu_{t}-\frac{p}{1-p}\sigma_{t}q_{1-p} +\frac{1+p}{1-p}\cvar^{p}_{t-1}(X), \quad\mbox{elsewhere}. 
\end{array}
\right.
\end{equation}
\item Suppose that  the return at time $t$, $X_{t} \sim \mathcal{W}(\lambda_{t},\alpha_{t};\theta_t)$.
The dynamic conditional value-at risk based on recursion associated with $X$ is:
\begin{equation}
\cvar^{p}_{0}(X)=\cvar^{p}(X_{0}),\quad \mbox{and for}\quad t=1,\ldots,T 
\end{equation}
\begin{equation}
\cvar^{p}_{t}(X)= \displaystyle
  \left\{
\begin{array}{rl}
&  \theta_{t} +\lambda_{t}\big(-\ln(1-p)\big)^{\frac{1}{\alpha_{t}}} - \cvar^{p}_{t-1}(X),                       
  \mbox{if} \, X_{t}\leq  \VaR^{p}(X_{t}) +2\cvar^{p}_{t-1}(X)\,,\\[4mm]%\, \PP. \;a.s \\[4mm]
&\frac{1}{1-p} \EE[X_{t}] -  \frac{p}{1-p}\Big( \theta_{t} +\lambda_{t}\big(-\ln(1-p)\big)^{\frac{1}{\alpha_{t}}} \Big) +  
\frac{1+p}{1-p} \cvar^{p}_{t-1}(X),      \mbox{elsewhere}. 
\end{array}
\right.
\end{equation}
\end{enumerate}
\end{lemme}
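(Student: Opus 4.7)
The plan is to apply the recursion from Definition~\ref{defdynamic1} directly to the conditional value-at-risk, obtaining
\begin{equation*}
\cvar^p_t(X) \;=\; \cvar^p\bigl(X_t + \cvar^p_{t-1}(X)\bigr),
\end{equation*}
and then to evaluate the right-hand side in closed form using the Rockafellar--Uryasev representation of Theorem~\ref{theocvar}. Writing $c \defegal \cvar^p_{t-1}(X)$ and applying Theorem~\ref{theocvar} together with the translation invariance of $\VaR^p$ (so that $\VaR^p(X_t+c)=\VaR^p(X_t)-c$) gives
\begin{equation*}
\cvar^p_t(X) \;=\; \VaR^p(X_t) - c \;+\; \frac{1}{1-p}\bigl(X_t + 2c - \VaR^p(X_t)\bigr)_+,
\end{equation*}
which already displays the piecewise structure announced in the lemma: the positive part is identically zero on the event $\{X_t \le \VaR^p(X_t) - 2c\}$ and strictly positive on its complement.

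For Part~1 I would substitute the Gaussian quantile $\VaR^p(X_t) = \mu_t + \sigma_t q_{1-p}$ from~\eqref{varnormal_t}. When the inequality $X_t \le \VaR^p(X_t) - 2c$ holds almost surely, the positive part drops out and I recover $\mu_t + \sigma_t q_{1-p} - \cvar^p_{t-1}(X)$ directly; this is case~1. In the complementary regime I expand $\frac{1}{1-p}(X_t + 2c - \VaR^p(X_t))$, collect the coefficients of $\VaR^p(X_t)$, $c$, and $X_t$, and apply $\EE[X_t]=\mu_t$ (the i.i.d.\ hypothesis is what legitimises this substitution); straightforward algebra, factoring out $\tfrac{1}{1-p}$ and using $\mu_t(1-p)/(1-p)=\mu_t$, then yields $\mu_t - \frac{p}{1-p}\sigma_t q_{1-p} + \frac{1+p}{1-p}\cvar^p_{t-1}(X)$, matching case~2. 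Part~2 follows the same template with the Weibull quantile $\VaR^p(X_t) = \theta_t + \lambda_t\bigl(-\ln(1-p)\bigr)^{1/\alpha_t}$ obtained by inverting~\eqref{eq:cumwei}; because no simple closed form is available for $\EE[X_t]$ in terms of $(\lambda_t,\alpha_t,\theta_t)$, the tail branch keeps the term $\frac{1}{1-p}\EE[X_t]$ explicit rather than simplifying it further.

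The main obstacle is bookkeeping of the sign conventions rather than any analytic difficulty. The translation rule $\Risk(X+m)=\Risk(X)-m$ from property~$P_2$ is precisely what produces the characteristic threshold $\VaR^p(X_t)-2c$ in the case split: the constant $c$ enters the argument of the positive part with a plus sign through $X_t+c$ and with a minus sign through $\VaR^p(X_t+c)=\VaR^p(X_t)-c$, so the two contributions combine to $2c$. Once this bookkeeping is fixed and the i.i.d.\ hypothesis is invoked to identify $\EE[X_t]$ with $\mu_t$ in the Gaussian case, the remainder reduces to routine algebra using only the quantile formulas already recorded earlier in Section~\ref{applic}.
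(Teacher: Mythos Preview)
Your approach is essentially the paper's: unfold the recursion $\cvar^p_t(X)=\cvar^p(X_t+c)$ with $c=\cvar^p_{t-1}(X)$, invoke Theorem~\ref{theocvar} at $\eta=\VaR^p(X_t+c)$, use translation invariance to write $\VaR^p(X_t+c)=\VaR^p(X_t)-c$, and then read off the two cases from the resulting positive part. The paper in fact stops at the expression
\[
\cvar^p_t(X)=\VaR^p(X_t)-c+\frac{1}{1-p}\,\EE\bigl[(X_t+2c-\VaR^p(X_t))_+\bigr]
\]
and leaves the case split implicit, so your sketch carries slightly more detail than the original.

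One slip to correct: your displayed identity omits the expectation around the positive part. Theorem~\ref{theocvar} reads $F_p(X,\eta)=\EE[\eta+\tfrac{1}{1-p}(X-\eta)_+]$, so the $\EE$ must remain; without it the right-hand side is a random variable, not a number. Your subsequent remark ``apply $\EE[X_t]=\mu_t$'' shows you intend to take it, so this is notational rather than conceptual. Also, the substitution $\EE[X_t]=\mu_t$ is justified directly by the distributional hypothesis $X_t\sim\mathcal{N}(\mu_t,\sigma_t)$; the i.i.d.\ assumption plays no role here (and indeed is not invoked in the paper's proof either).
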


\begin{proof}
 We shall prove 1. as the proof of 2. follows the same steps.
 
 From the definition~\ref{defdynamic1}, the dynamic $\cvar^{p}_{t}$ based on recursion 
associated with $X$ is
\begin{equation*}
\cvar^{p}_{t}(X)=\cvar^{p}\big(  X_{t} +\cvar^{p}_{t-1}(X)\big).
\end{equation*}
By the theorem~\ref{theocvar}, the above dynamic $\cvar^{p}_{t}$ can be rewritten as
\begin{equation*}
\cvar^{p}_{t}(X)=\VaR^{p}\big( X_{t}+\cvar^{p}_{t-1}(X)  \big) +\frac{1}{1-p}\EE\Big[  
\Big(X_{t} +\cvar^{p}_{t-1}(X)  -\VaR^{p}\big( X_{t}+\cvar^{p}_{t-1}(X)  \big) \Big)_{+} \Big].
\end{equation*} 
From~\eqref{varnormal_t}, $ \VaR^{p}\big( X_{t}+\cvar^{p}_{t-1}(X)  \big) = \mu_{t} -\cvar^{p}_{t-1}(X) + \sigma_{t} q_{1-p}$.
Then
\begin{eqnarray}
\cvar^{p}_{t}(X)&=& \mu_{t} -\cvar^{p}_{t-1}(X) + \sigma_{t} q_{1-p}+\frac{1}{1-p}\EE\Big[  
\Big(X_{t} +\cvar^{p}_{t-1}(X)  -\mu_{t} +\cvar^{p}_{t-1}(X) - \sigma_{t} q_{1-p}\Big)_{+} \Big],\nonumber\\
&=& \mu_{t} -\cvar^{p}_{t-1}(X) + \sigma_{t} q_{1-p}+\frac{1}{1-p}\EE\Big[  
\Big(X_{t}   -\mu_{t}- \sigma_{t} q_{1-p}+ 2\cvar^{p}_{t-1}(X)\Big)_{+} \Big].\nonumber
\end{eqnarray}
\end{proof}

\begin{lemme} Suppose that  the returns $X_{t}$ are Gaussian and independent where 
$\mu_t\defegal <\mu,Z_{t+1}>$ and $\sigma_t\defegal <\sigma,Z_{t+1}>$  with $\mu,\,\sigma\in\RR^N$.
Suppose the dynamic value-at-risk is driven by the recursion formula. Then the Markov modulated dynamic value-at-risk is
\begin{equation*}
\Big(\cvar^{p}_{0}\Big)^{Z}(X)=\cvar^{p}(X_{0}),\quad \mbox{and for}\quad t=1,\ldots,T 
\end{equation*}
\begin{equation}
\Big(\cvar^{p}_{t}\Big)^{Z}(X)=  \left\{
\begin{array}{rl}
& \overline{\mu}_{t}+\overline{\sigma}_{t}q_{1-p} ,\, \mbox{if} \quad X_{t}\leq \VaR^{p}(X_{t}),\\[4mm]
&\overline{\mu}_{t}+\frac{p}{1-p}\overline{\sigma}_{t}q_{1-p} ,     \quad\mbox{elsewhere}. 
\end{array}
\right.
\end{equation}
\end{lemme}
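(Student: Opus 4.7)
The plan is to start from the piecewise expression for the recursive Gaussian $\cvar^{p}_{t}(X)$ established in the previous lemma, and then compute its conditional expectation given $\mathcal{F}_{t}\vee \mathcal{F}_{t}^Z$ to obtain the Markov modulated version $\bigl(\cvar^{p}_{t}\bigr)^Z(X) = \EE_{\PP}\bigl[\cvar^{p}_{t}(X)\mid \mathcal{F}_{t}\vee \mathcal{F}_{t}^Z\bigr]$.

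First, I would substitute the state-dependent parameters $\mu_t = \langle \mu, Z_{t+1}\rangle$ and $\sigma_t = \langle \sigma, Z_{t+1}\rangle$ into the two branches of the recursion formula. On the event where $X_t$ lies below the threshold, the expression is affine in $\mu_t + \sigma_t q_{1-p}$; on the complementary event, it is the linear combination with coefficient $p/(1-p)$. The only source of randomness beyond what is already $\mathcal{F}_t\vee \mathcal{F}_t^Z$-measurable is $Z_{t+1}$, so the whole problem reduces to computing $\EE[\mu_t \mid \mathcal{F}_t\vee \mathcal{F}_t^Z]$ and $\EE[\sigma_t \mid \mathcal{F}_t\vee \mathcal{F}_t^Z]$.

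Second, I would invoke the semi-martingale decomposition $Z_{t+1}=AZ_t+M_{t+1}$ from~\eqref{markovchaine}, with $\EE[M_{t+1}\mid \mathcal{F}_t^Z]=0$. Since $\mu,\sigma\in\RR^N$ are deterministic, linearity gives
\begin{equation*}
\EE\bigl[\langle \mu, Z_{t+1}\rangle \bigm| \mathcal{F}_{t}\vee \mathcal{F}_{t}^Z\bigr]
= \langle \mu, A Z_{t}\rangle = \langle \mu A^{\top}, Z_{t}\rangle = \overline{\mu}_{t},
\end{equation*}
and likewise $\EE[\langle \sigma, Z_{t+1}\rangle\mid \mathcal{F}_{t}\vee \mathcal{F}_{t}^Z]=\overline{\sigma}_{t}$. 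The branching indicator $\mathbf{1}_{\{X_t\leq \VaR^{p}(X_t)\}}$ is $\mathcal{F}_t$-measurable, hence $\mathcal{F}_t\vee\mathcal{F}_t^Z$-measurable, so it factors out of the conditional expectation by the local property $D_4$, and each branch can be treated separately. Collecting the two pieces yields the piecewise expression stated in the lemma, with $\mu_t,\sigma_t$ replaced by their one-step predictions $\overline{\mu}_t,\overline{\sigma}_t$.

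The main obstacle is bookkeeping rather than conceptual: one must carefully reconcile the full expression of the previous lemma, which contains explicit $\cvar^{p}_{t-1}(X)$ correction terms, with the simplified form stated here. Because $\cvar^{p}_{t-1}(X)$ is $\mathcal{F}_{t-1}\vee \mathcal{F}_{t-1}^Z$-measurable and therefore passes unchanged through $\EE[\,\cdot\mid \mathcal{F}_t\vee\mathcal{F}_t^Z]$, these correction terms genuinely do survive unless one initialises at $\cvar^p_{t-1}(X)=0$ or reads the statement as retaining only the state-projected leading terms. I would therefore emphasise this normalisation/initialisation hypothesis explicitly when writing up the formal proof, so that the sign and coefficient conventions of the two branches match those in the displayed formula.
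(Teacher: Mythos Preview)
Your approach is exactly what the paper does: it simply refers back to the recursive Gaussian $\cvar$ formula (Lemma~\ref{lemme:dynamiccvar}) and to the computation $\EE[Z_{t+1}\mid \mathcal{F}_t^Z]=AZ_t$ used in Lemma~\ref{lemme:dynamicvar}, which is precisely the substitution $\mu_t\mapsto\overline{\mu}_t$, $\sigma_t\mapsto\overline{\sigma}_t$ you carry out. Your final paragraph is also well observed: the $\cvar^{p}_{t-1}(X)$ correction terms from Lemma~\ref{lemme:dynamiccvar} are indeed $\mathcal{F}_{t}\vee\mathcal{F}_{t}^Z$-measurable and do not vanish under the conditional expectation, so the displayed formula in the statement should be read as retaining only the state-projected leading terms (or under the convention $\cvar^{p}_{t-1}(X)=0$); the paper's one-line proof does not address this either.
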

\begin{proof}
Follow the steps in the proof of Lemma~\ref{lemme:dynamicvar} and use the result of Lemma~\ref{lemme:dynamiccvar}.
\end{proof}

\begin{lemme}
 Suppose that  the returns $X_{t}\sim \mathcal{W}(\lambda_{t},\alpha_{t};\theta_{t})$ and independent where 
$\lambda_t\defegal <\lambda,Z_{t+1}>$ and $\alpha_t\defegal <\alpha,Z_{t+1}>$ 
 with $\lambda,\,\alpha,\,\theta\in\RR^N$.
 Suppose that the dynamic value-at-risk  is driven by the recursion
formula. Then the Markov modulated dynamic conditional value-at-risk  is
\begin{equation*}
\Big(\cvar^{p}_{0}\Big)^{Z}(X)=\cvar^{p}(X_{0}),\quad \mbox{and for}\quad t=1,\ldots,T 
\end{equation*}
\begin{equation}
\Big(\cvar^{p}_t\Big)^{Z}(X)= \displaystyle
  \left\{
\begin{array}{rl}
&  \overline{\theta}_{t} +\overline{\lambda}_{t}\EE\big[\big(-\ln(1-p)\big)^{\frac{1}{\alpha_{t}}}\mid Z_{t}\big] - \big(\cvar^{p}\big)^{Z}_{t-1}(X),                       
  \\[2mm]
  &\hspace{3cm} \mbox{if} \, X_{t}\leq  \big(\VaR^{p}\big)^{Z}(X_{t})+2\big(\cvar^{p}\big)^{Z}_{t-1}(X),\\[4mm]
&\frac{1}{1-p} \EE[X_{t}\mid Z_t] -  \frac{p}{1-p}\Big( \overline{\theta}_{t} +
\overline{\lambda}_{t}\EE\big[\big(-\ln(1-p)\big)^{\frac{1}{\alpha_{t}}}\mid Z_{t}\big]\Big) \\[2mm]
&\hspace{4cm}+\frac{1+p}{1-p} \big(\cvar^{p}_{t-1}\big)^{Z}(X),      \mbox{elsewhere}. 
\end{array}
\right.
\end{equation}

\end{lemme}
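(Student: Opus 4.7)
The plan is to mirror the Gaussian-case argument immediately preceding, substituting the Weibull formulas from Lemma~\ref{lemme:dynamiccvar}, case~2. First I would start from the recursive identity $\cvar^{p}_{t}(X)=\cvar^{p}\bigl(X_{t}+\cvar^{p}_{t-1}(X)\bigr)$ and invoke the piecewise Weibull formula of Lemma~\ref{lemme:dynamiccvar} to express $\cvar^{p}_{t}(X)$ in terms of $\theta_{t}$, $\lambda_{t}$, $\alpha_{t}$ and $\cvar^{p}_{t-1}(X)$ on each branch of the dichotomy $\{X_{t}\leq \VaR^{p}(X_{t})+2\cvar^{p}_{t-1}(X)\}$.

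Second, I would apply the conditional expectation $\EE_{\PP}[\,\cdot\mid \mathcal{F}_{t}\vee \mathcal{F}_{t}^{Z}]$ branchwise. Since $\theta_{t}=\langle\theta,Z_{t+1}\rangle$ and $\lambda_{t}=\langle\lambda,Z_{t+1}\rangle$ are linear in $Z_{t+1}$, the semi-martingale decomposition~\eqref{markovchaine} combined with $\EE[M_{t+1}\mid Z_{t}]=0$ yields $\EE[\theta_{t}\mid Z_{t}]=\langle\theta A^{\top},Z_{t}\rangle=\overline{\theta}_{t}$ and $\EE[\lambda_{t}\mid Z_{t}]=\overline{\lambda}_{t}$. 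The nonlinear factor $(-\ln(1-p))^{1/\alpha_{t}}$ is left inside a conditional expectation to produce the stated $\EE[(-\ln(1-p))^{1/\alpha_{t}}\mid Z_{t}]$ term. The threshold $\VaR^{p}(X_{t})+2\cvar^{p}_{t-1}(X)$ is replaced by its Markov modulated counterpart $(\VaR^{p})^{Z}(X_{t})+2(\cvar^{p})^{Z}_{t-1}(X)$ once one conditions, reproducing the dichotomy stated in the lemma.

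Third, in the ``elsewhere'' branch the unconditional mean $\EE[X_{t}]$ must be replaced by $\EE[X_{t}\mid Z_{t}]$, since the Weibull parameters are themselves $\sigma(Z_{t+1})$-measurable and the conditioning sigma-algebra contains $\mathcal{F}_{t}^{Z}$. The recursive term on the right-hand side is handled by noting that $\EE_{\PP}[\cvar^{p}_{t-1}(X)\mid \mathcal{F}_{t-1}\vee \mathcal{F}_{t-1}^{Z}]=(\cvar^{p})^{Z}_{t-1}(X)$ by the very definition of the Markov modulated measure, so after tower-property reduction the recursion closes on the Markov modulated side. Combining the two branches produces the claimed expression.

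The main obstacle is the joint dependence of $\lambda_{t}$ and $\alpha_{t}$ on the same $Z_{t+1}$: the factored notation $\overline{\lambda}_{t}\,\EE[(-\ln(1-p))^{1/\alpha_{t}}\mid Z_{t}]$ is really shorthand for the componentwise sum $\sum_{i=1}^{N}\lambda_{i}(-\ln(1-p))^{1/\alpha_{i}}(AZ_{t})_{i}$, which I would unpack carefully on the finite state space $\mathcal{E}$ to justify the displayed form. The indicator defining the dichotomy is $\sigma(X_{t})\vee\mathcal{F}_{t}^{Z}$-measurable, so it passes through the conditional expectation without further work.
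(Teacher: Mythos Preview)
Your approach is exactly the one the paper takes: the paper's proof consists of the single sentence ``Follow the steps in the proof of Lemma~\ref{lemme:dynamicvar} and use the result of Lemma~\ref{lemme:dynamiccvar},'' which is precisely what you outline. Your proposal is in fact more careful than the paper's own treatment, since you flag the factorization issue for $\overline{\lambda}_{t}\,\EE[(-\ln(1-p))^{1/\alpha_{t}}\mid Z_{t}]$ arising from the joint dependence of $\lambda_{t}$ and $\alpha_{t}$ on $Z_{t+1}$; the paper does not address this point.
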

\begin{proof}
Follow the steps in the proof of Lemma~\ref{lemme:dynamicvar} and use the result of  Lemma~\ref{lemme:dynamiccvar}.
\end{proof}

\subsection{Illustration through simulation}

We shall illustrate the closed dynamic risk formulas, obtained in the previous Lemmas, in two different problems.
The first, illustrates dynamic risk when the random returns are Gaussian. The parameters of the Gaussian returns are calibrated based 
on the annual MSCI world developed market performance index between 1970 and 2009 and  a Markov Chain. 
The second  focuses on random returns which follow Weibull distributions. The parameters of the Weibull distributions are calibrated based on the historical data of a portfolio index quoted on the NYMEX market exchange, namely the global equity portfolio index. A complete risk/return report on the global equity portfolio index is also available on Bloomberg. The parameters of the Weibull distribution will depend on the state of the economy. Notice that the dynamic risk measures and the Markov modulated risk measures are relatively easy to compute. The results reveal that, the recursive risk measures and the Markov modulated risk measures are bounded above by the static risk measurements and then may lead to a lower capital requirement.
\subsubsection{Gaussian returns}
Assume that  $X_{t} \sim \mathcal{N}(\mu_{t},\sigma_{t})$.  Then $\mu_t\defegal <\mu,Z_{t+1}>\,\mbox{and}\,\sigma_t\defegal <\sigma,Z_{t+1}>$  with $\mu,\,\sigma\in\RR^N$, where $N=2$. To calibrate $\mu_t$ and $\sigma_t$, assume that the return of a  risky asset  $X\sim\mathcal{N}(M,\Sigma)$ (with an initial investment of  \$1~000~US) where the
mean and the standard deviation  are calibrated based on the annual MSCI world 
developed market 
performance index between 1970 and 2009. Hence the
mean is M=\$1~113.3425~US dollars  and the standard deviation is $\Sigma$=\$186.29~US. Then, write 
$\mu\defegal \big[M\!*\!1.05\;\;\;  M\!*\!0.95 \big]$ and
 $\sigma\defegal \big[\Sigma\!*\!1.05\;\;\;  \Sigma\!*\!0.95 \big]$. For the Markov chain, assume that the transition matrix is
\begin{equation*}
A =  \left [
\begin{array}{rl}
0.25 & 0.75 \\[4mm]
0.35 & 0.65
\end{array}
\right ],
\end{equation*}
such that $Z_0= [1\;;\; 0]^{\top}$. The confidence level $p$ is fixed to $0.99$. Consider a horizon of  $10$ days, which is the usual
time horizon in market risk reports. We  then  compute and compare different definitions of market risk: static, dynamic recursive and one scenario of the Markov modulated risk. The results are gathered in Figure~\ref{OutPutNormal}.
\begin{figure}[ht!]
\centering \epsfig{file=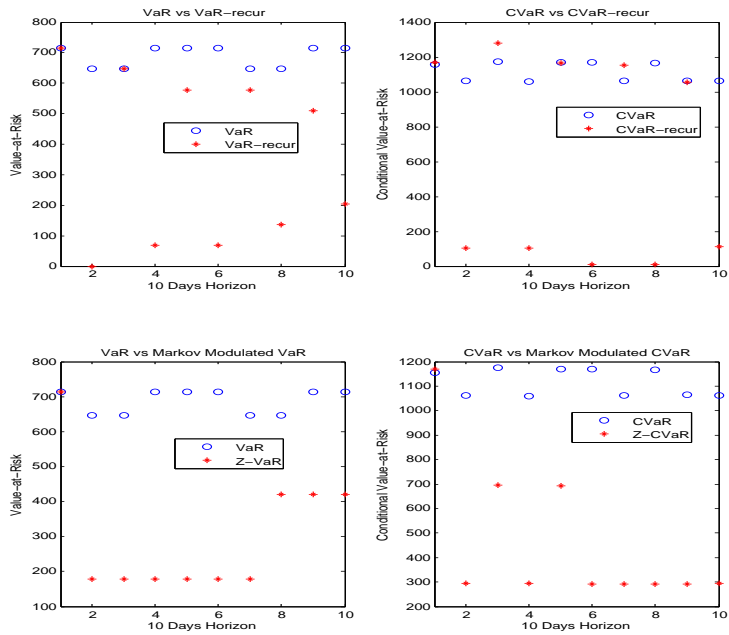,width=12.0cm,height=10.0cm}
\caption{Dynamic Risk  with Gaussian returns.\label{OutPutNormal} }
\end{figure}
Firstly the VaR is below the CVaR, which is consistent of the fact that the VaR is lower than  the CVaR.
Secondly it appears that  recursive and  Markov modulated risk are bounded above by the static risk for both  VaR and CVaR
 most of the time. We have plotted a trajectory of the Markov modulated risk, but more simulations confirm the same behavior. 
Observe that,  recursive risks are more variable than the other risks in the sense that it leads to high risk  on some days and 
low risk on other days. This is because, cash is added in the portfolio to attempt to reduce the risk. 

%\subsection{Market Risk under the hypothesis of  Weibull distribution}
\subsubsection{Returns with Weibull Distribution}
We  begin by estimating the parameters of the Weibull distribution, 
namely $\lambda_t$, $\alpha_t$ and $\theta_t$. To this end, we derive $\lambda_0$ and $\alpha_0$ from real data obtained from
Bloomberg and then generate $\lambda_t$ and $\alpha_t$. We suppose that
$\theta_t=0$, for $t=0,\ldots,T$. The parameters $\lambda$ and $\alpha$ are calibrated based on the returns of the Global Equity Portfolio, quoted $BBGEX$ in Bloomberg, from $12/30/11$ to $09/27/12$. As a traded portfolio index, historical data and 
a complete risk report (returns, risks and risk/return ratios for different horizon times) 
 can be obtained from Bloomberg, see Figure~\ref{BBGEX-Report}. 
 \begin{figure}[b!]
\begin{minipage}{8.0cm}
\centering \epsfig{file=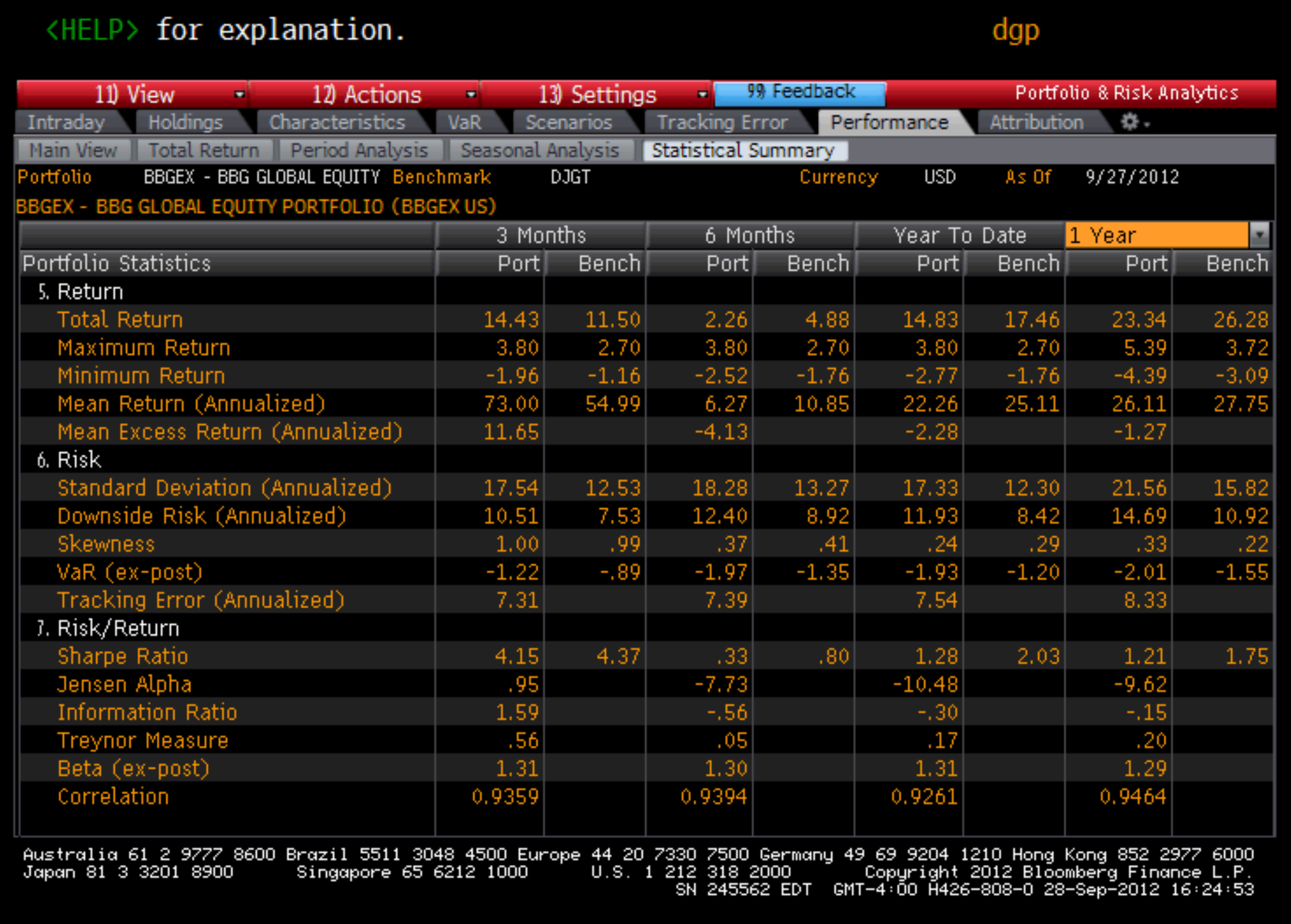,width=8.0cm,height=8.0cm}
\end{minipage}
\begin{minipage}{8.0cm}
\centering \epsfig{file=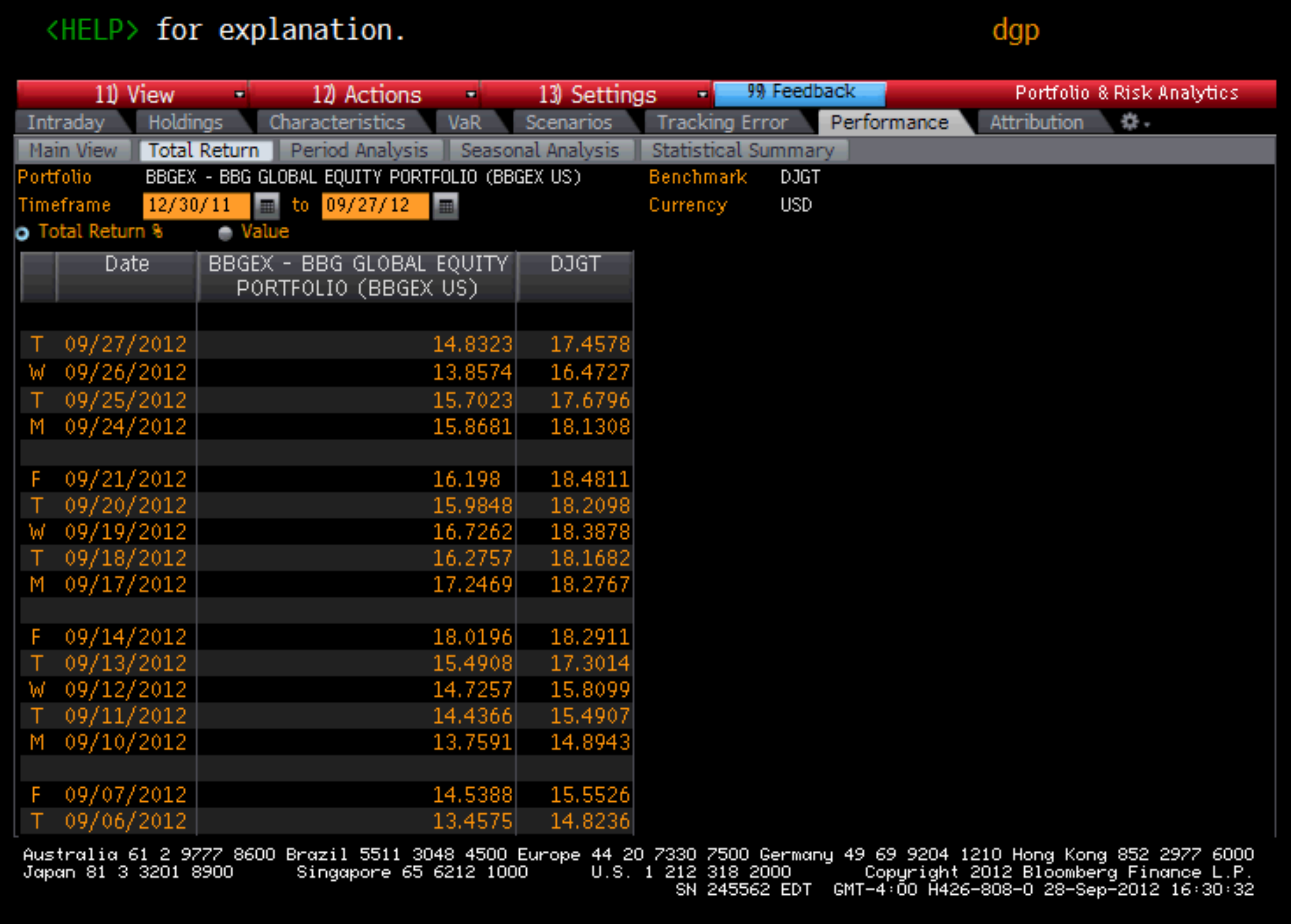,width=7.0cm,height=8.0cm}
\end{minipage}
\caption{Historical Data and Risk Report of the Global Equity Portfolio on  Bloomberg. \label{BBGEX-Report} }
\end{figure}

We then define $\lambda_t\defegal <\lambda,Z_{t+1}>$  with $\lambda\in\RR^2$, and write 
$\lambda\defegal \big[\lambda_0\!*\!1.05\;\;\;  \lambda_0\!*\!0.95 \big]$ and
 $\alpha_t\defegal \alpha_0$ for all $t=1,\ldots T$.
Using the Matlab command, wblfit, the values $\lambda_0=6.7679$ and $\alpha_0= 0.8016$ are obtained. The same values for the matrix of transition $A$ are used. Then different market risks are computed and compared, as shown in Figure~\ref{OutPutWeibull}.
\begin{figure}[t!]
\centering \epsfig{file=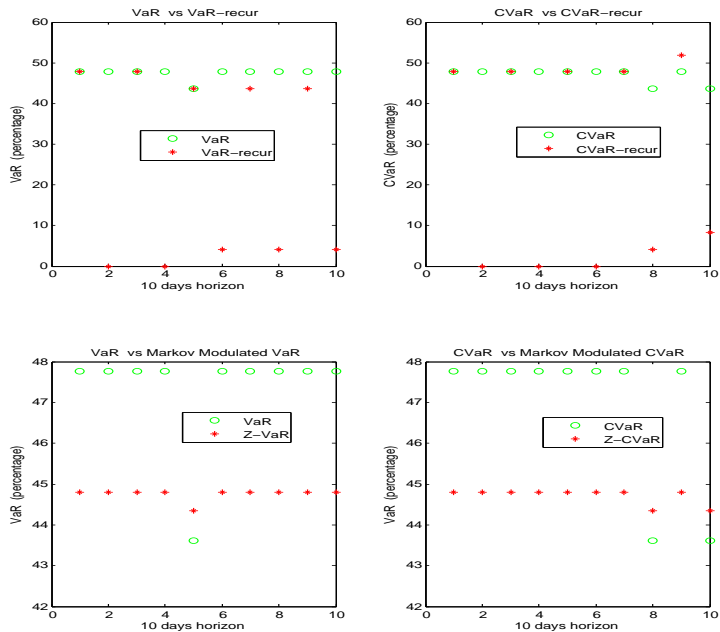,width=12.0cm,height=10.0cm}
\caption{Dynamic Risk  with Weibull returns.\label{OutPutWeibull} }
\end{figure}
The recursive and the Markov modulated risk are again well above the static risk for the VaR and the CVaR.
The behavior of the recursive risk measurements are confirmed, bounded above by the static risk and null sometimes.
The Markov modulated risk are relatively stable and low. 

In conclusion, firstly the dynamic formulation of the market risk measure presented are relatively easy to compute and  interpret, in comparison with the existing formulations in the literature. Secondly, the Markov modulated risk give lower values than the static risk and do not need as much additional cash as required for the risk based on recursion. Also, the Markov modulated risk  may be preferred in terms of risk management as they lead to a lower capital requirement and give, better estimates by taking into account the state of the economy. A drawback is how to calibrate the parameters of the change rate of the economy. Are those parameters dependent on a certain category of returns, or can they be calibrated for any kind of assets? These questions are open and are crucial as the transition matrix  appears in the Markov modulated risk closed formulas.

\section{Conclusion}\label{conclu}
Dynamic risk measure formulas which are easy to compute, are obtained using recursive formulas and state economy representations.
The dynamic risks with recursive formulas  are derived from the invariant translation property.  The Markov modulated risks are based on a finite state discrete Markov chain which represents the state of the economy.  In both cases, the dynamic risk measures inherit properties from the initial risk measure used to express the dynamic risk measures. It appears that, at a given time $t$, the static risk measure  is not necessarily greater than the recursive dynamic risk measures, and the Markov modulated risk gives lower values. It would be interesting to extend the dynamic risk formulas into a continuous time framework.

\bibliographystyle{gENO}

%\appendix

\bibliography{Seck_Biblio}

\end{document}